\newcommand{\deriv}[2]{\frac{\partial #1}{\partial #2}}
\newcommand{\id}{{\mathrm{id}}}
\newcommand{\var}{{\rm I\kern-.3em D}}
\newcommand{\cond}{\,|\,}
\newcommand{\dd}{\mathrm{d}}
\newcommand{\norm}[1]{\left\lVert#1\right\rVert}
\DeclareMathOperator*{\argmin}{arg\,min}
\DeclareMathOperator{\Tr}{Tr}
\newtheorem{theorem}{Theorem}
\newtheorem*{theorem*}{Theorem}
\newtheorem{lemma}{Lemma}
\newtheorem*{lemma*}{Lemma}
\newtheorem*{proposition*}{Proposition}
\newtheorem{definition}{Definition}[section]
\icmltitlerunning{Deterministic Gibbs Sampling via ODEs}
\begin{document}

\twocolumn[
\icmltitle{Deterministic Gibbs Sampling via Ordinary Differential Equations}



\icmlsetsymbol{equal}{*}

\begin{icmlauthorlist}
\icmlauthor{Kirill Neklyudov}{uva}
\icmlauthor{Roberto Bondesan}{qualcomm}
\icmlauthor{Max Welling}{uva,qualcomm}
\end{icmlauthorlist}

\icmlaffiliation{qualcomm}{Qualcomm AI Research, Qualcomm Technologies Netherlands B.V. (Qualcomm AI Research is an initiative of Qualcomm Technologies, Inc.)}
\icmlaffiliation{uva}{University of Amsterdam}

\icmlcorrespondingauthor{Kirill Neklyudov}{k.necludov@gmail.com}

\icmlkeywords{Machine Learning, ICML}

\vskip 0.3in
]



\printAffiliationsAndNotice{}  

\begin{abstract}
Deterministic dynamics is an essential part of many MCMC algorithms, e.g. Hybrid Monte Carlo or samplers utilizing normalizing flows. This paper presents a general construction of deterministic measure-preserving dynamics using autonomous ODEs and tools from differential geometry.  We show how Hybrid Monte Carlo and other deterministic samplers follow as special cases of our theory. We then demonstrate the utility of our approach by constructing a continuous non-sequential version of Gibbs sampling in terms of an ODE flow and extending it to discrete state spaces. We find that our deterministic samplers are more sample efficient than stochastic counterparts, even if the latter generate independent samples. 
\end{abstract}

\section{Introduction}
\label{sec:intro}

Markov Chain Monte Carlo (MCMC) \citep{metropolis1953equation} has been the workhorse for sampling from probability distributions for which exact inference is intractable. In particular, in the Bayesian literature MCMC is used to sample parameter values from the posterior distribution $p(\theta|\mathcal{D})$ of the model parameters $\theta$ given data $\mathcal{D}$. The usual design of a Markov chain involves a stochastic transition kernel $\tau(x'|x)$ that has the target density as it's equilibrium distribution. One simple way to satisfy that requirement is to let the kernel satisfy detailed balance which can be easily achieved by including a Metropolis-Hastings accept-reject step in the kernel. 

Kernels that satisfy detailed balance are reversible and known to mix slowly. It is much preferred if we can add irreversible components to the sampler, see e.g. \citep{turitsyn2011irreversible}. When designing irreversible kernels one can in fact take the limit where the kernel becomes completely deterministic. While ergodicity of such samplers is not easy to prove, they can in some cases be designed to sample very efficiently from the target distribution, converging as fast as $1/T$ rather than the usual $1/\sqrt{T}$, where $T$ is the number of iterations of the sampler. 

Examples of samplers with deterministic components exist in the literature, most notably Hybrid Monte Carlo \citep{duane1987hybrid} (which still includes a MH accept-reject to correct for integration errors and resamples the momenta to achieve ergodicity). 

In this paper we make two contributions. Firstly, we develop the general theory of designing deterministic samplers from ordinary differential equations. We find that that the class of such samplers is in fact very large and can be characterized by the set of divergence free vector fields: every divergence free vector field corresponds to one such sampler (but may not be ergodic). The second contribution is to apply this to Gibbs sampling and extend that to the discrete domain. The resultant deterministic samplers are highly efficient, even when we compare to samplers that sample independently from the target distribution. 
In Fig. \ref{fig:portrait}, we illustrate this sampler on the portrait of Josiah Willard Gibbs.

\begin{figure}[t]
    \centering
    \includegraphics[width=0.48\textwidth]{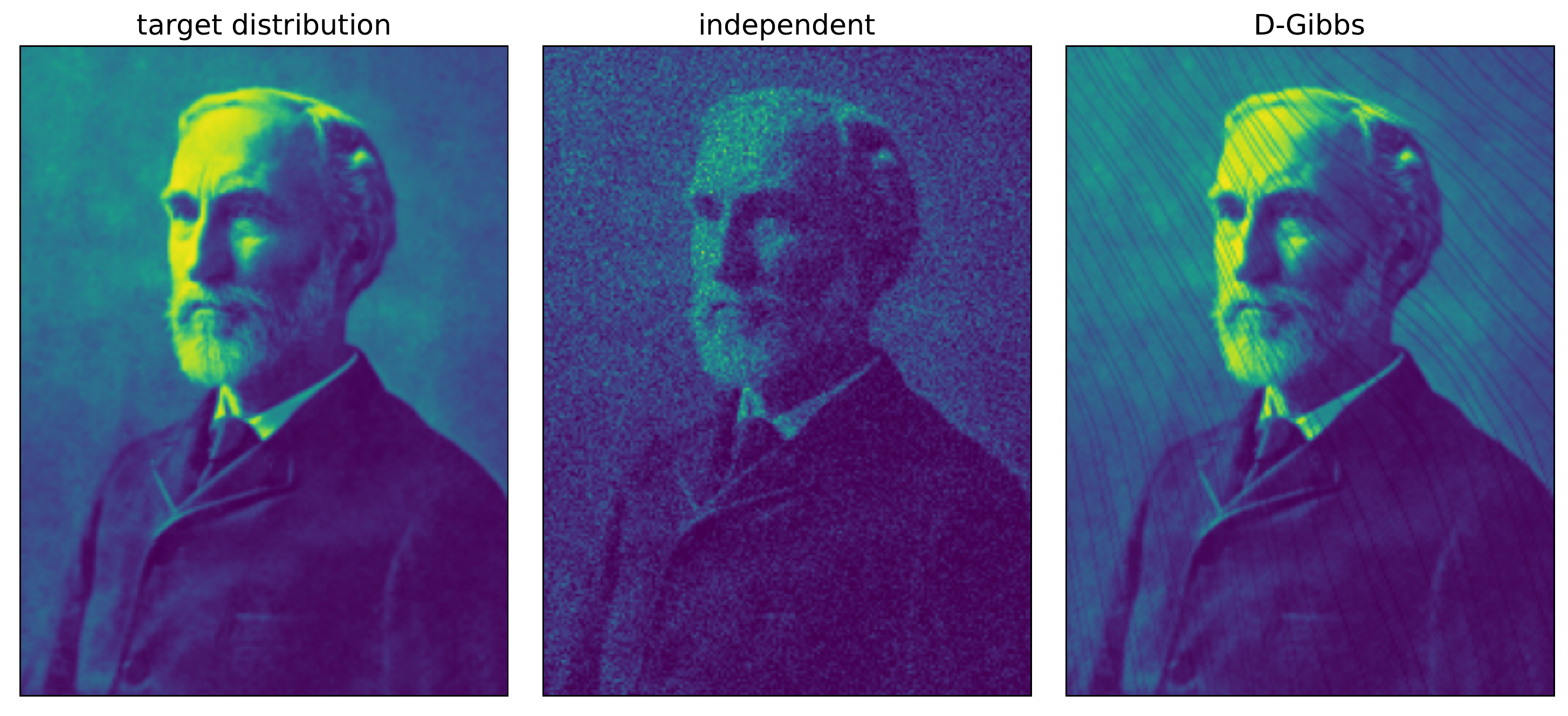}
    \caption{Sampling Gibbs with Gibbs: Histogram of samples from the discrete distribution (leftmost picture). Each pixel value of the grayscale image ($237\times 178$) is treated as an unnormalized probability. The middle histogram is generated by sampling independently from a multinomial distribution. The rightmost histogram depicts samples from the proposed deterministic Gibbs algorithm.}
    \label{fig:portrait}
    \vspace{-10pt}
\end{figure}

\vspace{-10pt}
\section{Deterministic Sampling using ODEs}
\label{sec:approach}

In this section we will derive a ``deterministic sampling'' procedure, by integrating a simple (autonomous) ODE equation 
\begin{align}
    \frac{\dd x}{\dd t} = v(x).
    \label{eqn:ODE}
\end{align}
This describes the flow of ``particles'' at every position $x$ and we demand that the flow leaves the target distribution invariant. We will show that the change in the probability density is described by the Liouville equation, which results in a simple condition for invariance.

Before we start the derivation we mention that deterministic samplers are a special class of MCMC samplers which usually design a Markov chain through \emph{stochastic} transition functions $\tau$: $p'(x') = \int dx ~\tau(x'|x) p(x)$, where $\tau$ is a conditional probability distribution. We then require invariance of the density $p'=p$ after the transition, which put constraints on $\tau$. 

One convenient class of transition kernels are the ones that satisfy detailed balance: $\tau(x'|x)p(x) = \tau(x|x')p(x')$. Since $\tau$ is its own inverse we say that these transition kernels are reversible. However, we can design (stochastic or deterministic) kernels that do not satisfy detailed balance but do leave the target distribution invariant. These are often called irreversible transition kernels (which we think is a bad name because we can in fact run the process backward for which we need the time-reversed kernel $\Tilde{\tau}(x|x') = \tau(x'|x)p(x)/p(x')$). When we consider deterministic samplers we are always considering ``irreversible'' kernels that map one point to one other point using the ODE \eqref{eqn:ODE}. The reverse kernel is implemented by simply running the ODE backward in time. 

We note that we require ergodicity from our samplers to make sure they visit every state of our configuration space with some finite probability. This is easy to achieve with stochastic samplers in theory by requiring that that the transition kernel $\tau$ has non-vanishing probability mass everywhere. However, this question is more complex for deterministic samplers as we now need to require that the ODE visits every part of state space and will not end up in some periodic orbit. We will say more on this later in the paper. 

Ergodic irreversible samplers are particularly attractive because they tend to mix much faster than reversible samplers \cite{turitsyn2011irreversible}. The difference can be understood with an analogy: reversible kernels mix through a process similar to diffusion, while irreversible samplers mix through a process similar to convection, which is much more efficient. Deterministic samplers have the added benefit of avoiding inefficiencies due to randomness.  

We now derive the condition for invariance of a probability measure under the ODE flow.

\begin{theorem}[Liouville's theorem]\label{thm:liouville}
A density $p(x)$ is invariant under the ODE flow $\dot{x} = v(x)$ if
\begin{align}
    {\rm div}(pv) = 0 \,.
\end{align}
\end{theorem}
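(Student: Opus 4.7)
My plan is to derive the continuity equation associated with the flow and then read off the invariance condition from it. Let $\phi_t : \RR^n \to \RR^n$ denote the flow of the autonomous ODE, i.e.\ $\frac{\dd}{\dd t}\phi_t(x) = v(\phi_t(x))$ with $\phi_0 = \id$. Starting from an initial density $p_0$, the time-evolved density $p_t$ is defined by the pushforward under $\phi_t$, which I would characterize weakly: for every smooth compactly supported test function $f$,
\begin{equation*}
    \int f(x)\, p_t(x)\, \dd x \;=\; \int f(\phi_t(x))\, p_0(x)\, \dd x.
\end{equation*}
This dual formulation is convenient because it sidesteps an explicit manipulation of the Jacobian $\det D\phi_t$.

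Next, I would differentiate both sides in $t$ and evaluate at an arbitrary time. On the right-hand side the chain rule gives $\partial_t f(\phi_t(x)) = \nabla f(\phi_t(x)) \cdot v(\phi_t(x))$, and after a change of variables back to the Eulerian picture this yields
\begin{equation*}
    \int f(x)\, \partial_t p_t(x)\, \dd x \;=\; \int \nabla f(x)\cdot v(x)\, p_t(x)\, \dd x.
\end{equation*}
Integration by parts on the right-hand side (with the boundary terms vanishing either because $f$ has compact support or because $p_t$ decays at infinity) produces $-\int f(x)\, \mathrm{div}(p_t v)(x)\, \dd x$. Since $f$ is arbitrary, this forces the pointwise continuity equation
\begin{equation*}
    \partial_t p_t \;=\; -\,\mathrm{div}(p_t v).
\end{equation*}

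To finish, I would specialize to $p_0 = p$ and observe that $p_t \equiv p$ for all $t$ is equivalent to $\partial_t p_t \equiv 0$, which by the continuity equation is equivalent to $\mathrm{div}(p v) = 0$. Conversely, if $\mathrm{div}(p v) = 0$ then $p$ itself is a stationary solution of the continuity equation, and by uniqueness the flow leaves $p$ invariant.

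The main obstacle, and the only place some care is required, is justifying the integration by parts and the change of variables: one needs enough regularity on $v$ (say Lipschitz, so that $\phi_t$ exists and is a diffeomorphism) and enough decay or compact support so that boundary terms disappear. Under the mild smoothness assumptions implicit in the paper, these technicalities are routine; the geometric content is entirely captured by the weak formula above.
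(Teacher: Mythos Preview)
Your proof is correct and standard, but it follows a genuinely different route from the paper. The paper works in the \emph{strong} (pointwise) formulation: it writes the change-of-variables formula $p(x_0) = p(g(x_0,t))\,|\det \partial g/\partial x_0|$, linearizes the flow $g(x_0,\epsilon) = x_0 + \epsilon\, v(x_0) + o(\epsilon)$, expands the Jacobian determinant as $1 + \epsilon\,\mathrm{div}\,v(x_0) + o(\epsilon)$, and reads off $\frac{\dd p}{\dd t} + p\,\mathrm{div}\,v = 0$, which it then rewrites as $\mathrm{div}(pv)=0$. You instead work in the \emph{weak} (dual) formulation: pair $p_t$ against a test function, differentiate in $t$ along the flow, and integrate by parts to obtain the continuity equation $\partial_t p_t = -\mathrm{div}(p_t v)$ without ever touching the Jacobian.

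What each approach buys: the paper's argument is more explicit and ``physical,'' making the role of the Jacobian visible; it is also the version that generalizes most directly to the discontinuous/discrete setting treated later in the paper (Appendix~\ref{app:disc_liouville}). Your approach is cleaner analytically --- no determinant expansion, no appeal to $\det\exp A = \exp\Tr A$ --- and it gives the full continuity equation directly, from which both the ``if'' direction stated in the theorem and the converse fall out. Both are entirely valid derivations of the same PDE.
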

\begin{proof}
The flow $g$ generated by the vector field $v$ from an initial value $x_0$ at time $0$ to time $t$ is
\begin{align}
    x(t) = g(x_0,t) = x_0 + \int_{0}^t v(x(t')) \dd t'
    \,.
\end{align}
Under the map $x_0 \mapsto g(x_0,t)$ the density $p(x)$ transforms with the well known ``change of variables'' formula:
\begin{align}
    & p(x_0) = p(g(x_0,t)) \bigg|\det\left(\deriv{g(x_0,t)}{x_0}\right)\bigg|,
    \label{eq_app:measure-preservings}
\end{align}

To enforce invariance of the density $p(x)$ we demand that at first order in $\epsilon$, $p(g(x_0,\epsilon)) = p(x_0)$. 
We linearize the flow
\begin{align}
    g(x_0,\epsilon) &= g(x_0,0) + \deriv{g(x_0,t)}{t}\bigg|_{t=0}\epsilon + o(\epsilon) \\
    &= x_0 + v(x_0)\epsilon + o(\epsilon),
\end{align}
so that, using $\det(\exp(A)) = \exp(\Tr(A))$,
the Jacobian is
\begin{align}
\det\left(\deriv{g(x_0,t)}{x_0}\right)
&=
\det\left(
\id
+
\deriv{v(x_0)}{x_0}
\epsilon
+ o(\epsilon)
\right)
\\
&=
1 + \epsilon\, \text{div}(v(x_0)) + o(\epsilon)
\,.
\end{align}
Plugging this into \eqref{eq_app:measure-preservings}, together with $p(g(x_0,\epsilon)) = p(x_0)+\epsilon \frac{\dd p}{\dd t}$, we get to first order in $\epsilon$:
\begin{align}
    \frac{\dd p}{\dd t} + p(x_0)\text{div}v(x_0)
    =
    0
    \label{eqn:Liouville},
\end{align} 
or, rewriting the time derivative in terms of $v$,
\begin{align}
  \nabla p\frac{\dd x}{\dd t} + p(x)\text{div}v(x) = \text{div}(pv) = 0 
  \label{eq:contin_stationary}.
\end{align}
\vspace{-10pt}
\end{proof}
This ``conservation of probability'' is known as the Liouville equation.
For a formal treatment, see e.g. \citep{fomin1981ergodic}.

It is perhaps surprising that, apart from ergodicity, the only thing we need to require for invariance is the simple condition $\text{div}(pv) = 0$. Thus, for any divergence free vector-field $w$ we can simply set $v=w/p$ and flow according to the ODE of eqn. \ref{eqn:ODE} and be guaranteed that the target density is invariant. Intuitively, this means that for any cell in state space the amount of probability flowing in and out of that cell is the same. Different velocity fields $v$ result in different mixing properties of the resultant sampler but the design space of all such fields is clearly very large. In section \ref{sec:hodge} we characterize the class of all divergence free vector fields using notions from differential geometry.    


\subsection{An Example: Hamiltonian Monte Carlo}
\label{sec:hmc}

We will next describe a simple and celebrated example of the above method.

Consider the joint distribution $p(s) = p(x,y) = p(x)p(y)$, where $p(x)$ is the target density, $p(y)$ is an auxiliary distribution, and $s = (x,y)$ is a tuple of two vectors.
Then the following dynamics preserves $p(s)$.
\begin{align}
    \frac{\dd s}{\dd t} = v(s) = \frac{1}{p(s)} \begin{bmatrix}\nabla_y H(s)\\-\nabla_x H(s)\end{bmatrix},
\end{align}
where $H(s)$ is an arbitrary differentiable function, called the Hamiltonian from the role it plays in classical mechanics.
The divergence-free property of Hamiltonian vector fields can be easily checked:
\begin{align}
    &\text{div} (v(s)p(s)) = \text{div} \begin{bmatrix}\nabla_y H(s)\\-\nabla_x H(s)\end{bmatrix} =\\
    &= \nabla_x\nabla_y H(s) - \nabla_y\nabla_x H(s) = 0.
\end{align}
Note that we are not restricted in the choice of $H(s)$, i.e. it can be completely independent of the target density $p(s)$.
However, if we choose the Hamiltonian as $H(s) = H(x,y) = -p(x)p(y)$, then we obtain the Hybrid Monte Carlo algorithm \citep{duane1987hybrid} as a special case:
\begin{align}
    \frac{\dd x}{\dd t} &= \frac{1}{p(x)p(y)} \nabla_y H(x,y) = -\nabla_y \log p(y),\\
    \frac{\dd y}{\dd t} &= -\frac{1}{p(x)p(y)} \nabla_x H(x,y) = \nabla_x \log p(x),
\end{align}
where $-\log p(x)$ plays the role of the potential energy and the auxiliary kinetic energy is usually taken as $-\log p(y) = \frac{1}{2}||y||^2-\frac{n}{2}\log(2\pi)$, $n$ being the dimensionality of $y$.

\section{Dynamical Gibbs Sampling for Continuous State Spaces}
\label{sec:dgibbs_continuous}

In this section we will describe Dynamical Gibbs sampling via ODEs.
We first formulate dynamical Gibbs sampling in continuous spaces and then generalize to discrete spaces. 

\subsection{1D case}
\label{sec:1d_case}

We start our reasoning with the 1D case.
\begin{lemma}
Any deterministic sampler in 1D is of the form
\begin{align}
    \frac{\dd x}{\dd t}
    =
    v(x) = \frac{c}{p(x)}, \text{ where $c$ is a constant}\,,
    \label{eq:1d_gibbs}
\end{align}
which can be integrated in terms of the CDF $F_p$ as
\begin{align}
    F_p(x(t)) = ct + F_p(x_0)\,.
    \label{eq:1d_gibbs_integrated}
\end{align}
\end{lemma}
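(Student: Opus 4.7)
The plan is to read off the statement directly from Theorem~\ref{thm:liouville} applied in one dimension, then integrate the resulting ODE by separation of variables to obtain the CDF expression.

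First, recall that by Theorem~\ref{thm:liouville}, any deterministic sampler with velocity field $v$ that preserves $p$ must satisfy $\text{div}(pv)=0$. In one dimension, the divergence reduces to the ordinary derivative, so the invariance condition becomes $\tfrac{\dd}{\dd x}\bigl(p(x)v(x)\bigr)=0$. This is a first-order ODE in $x$ whose solutions are precisely the constant functions, so $p(x)v(x)=c$ for some constant $c\in\RR$. Assuming $p(x)>0$ on the support, we can divide through to get $v(x)=c/p(x)$, which is \eqref{eq:1d_gibbs}.

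Next, I would integrate the dynamics. Writing $\dd x/\dd t = c/p(x)$ and separating variables gives $p(x)\,\dd x = c\,\dd t$. Integrating both sides from the initial time $0$ (with $x(0)=x_0$) to time $t$ yields
\begin{align}
\int_{x_0}^{x(t)} p(u)\,\dd u = c\,t\,.
\end{align}
The left-hand side is exactly $F_p(x(t))-F_p(x_0)$ by definition of the CDF, which rearranges to \eqref{eq:1d_gibbs_integrated}.

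There is no real obstacle in the argument: the lemma is essentially a direct corollary of Theorem~\ref{thm:liouville} once one observes that divergences collapse to derivatives in 1D. The only mild subtlety worth flagging is the tacit assumption that $p$ is positive on its support, so that $v=c/p$ is well-defined; if $p$ vanishes, the flow would stall at those points and the characterization still holds in the weak sense that $pv$ must be constant. One might also want to note that the constant $c$ parametrizes a rescaling of time, so up to reparametrization there is essentially a unique such sampler.
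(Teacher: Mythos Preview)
Your proposal is correct and follows essentially the same approach as the paper: both reduce Theorem~\ref{thm:liouville} to $\tfrac{\dd}{\dd x}(p(x)v(x))=0$ in 1D to obtain $v=c/p$. For the integrated form, the paper simply verifies \eqref{eq:1d_gibbs_integrated} by taking its time derivative, whereas you derive it by separation of variables---these are equivalent and equally short.
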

\begin{proof}
The first part follows immediately since in 1D theorem \ref{thm:liouville} becomes:
$
    \text{div}(pv) = \frac{\dd}{\dd x} (p(x)v(x)) = 0
$.
The second part can be verified easily by 
taking the time derivative of \eqref{eq:1d_gibbs_integrated}.
\end{proof}
Note, however, that the direction of the velocity field on the line doesn't change the sign since $p(x) > 0$. Thus, for positive $c$, the particle goes to $x\to +\infty$ and never comes back. Moreover, it visits $+\infty$ in a finite amount of time $t_\infty = (1-F_p(x_0))/c$, as one can derive from \eqref{eq:1d_gibbs_integrated} by setting 
$x(t_\infty) = \infty$.
To ensure that the system is able to visit every state we map the dynamics on the circle gluing $+\infty$ to $-\infty$. In practice, this can be achieved, for instance, by mapping the state space via a hyperbolic tangent.
Another way is to use the CDF $F_p$ to map $\mathbb{R}$ onto $[0,1]$, and update the state modulo $1$ in \eqref{eq:1d_gibbs_integrated}:
\begin{align}
    x(t) = g(x_0,t) = F_p^{-1}\bigg(
    (ct + F_p(x_0))\mod 1\bigg).
\end{align}
We now define the dynamic Gibbs sampler in 1D as the procedure that collects values $x(t)$ as we evolve the initial condition according to the ODE \eqref{eq:1d_gibbs}.
We remark that this is related to the procedure described in \citep[Fig. 1]{murray2012driving}, where time evolution now furnishes the stream ${\cal D}$ used in that work.

\subsection{General case}

Given the 1D sampler, Gibbs sampling now allows us to scale this scheme to an arbitrary number of dimensions $n$ assuming that the conditional densities are available.
Recall that the Gibbs algorithm samples one coordinate at a time $x_i \sim p(x_i\cond x_{\setminus i})$, where $x_{\setminus i}$ denotes all the coordinates except $i$-th, and we use the most recent values of $x_{\setminus i}$ in the conditioning statement.
Thus, according to Gibbs sampling, at each iteration, we should update the state along the current dimension as
\begin{align}
    x_i(t+\epsilon) = x_i(t) + \int_t^{t+\epsilon} \dd t'
    \frac{c_i}{p(x_i(t')\cond x_{\setminus i})},
\end{align}
where all values $x_{\setminus i}$ are fixed and we sample only $x_i$.
A significant downside of this approach is that we update only one coordinate at a time increasing the auto-correlation of the samples.
Fortunately, our dynamical formulation allows us to update all of the dimensions simultaneously. We get then to the following procedure.

\begin{definition}[Dynamical Gibbs Sampler]\label{def:dyn_gibbs}
The dynamical Gibbs sampler for continuous state spaces is the deterministic sampler that follows the ODE flow generated by the vector field $v(x)$ with components
\begin{align}
    v_i(x) = \frac{c_i}{p(x_i\cond x_{\setminus i})}.
    \label{eq:dgibbs}
\end{align}
\end{definition}
It is easy to check that it is divergence-free:
\begin{align}
    \text{div}(p(x)v(x)) = \sum_i\deriv{}{x_i} \bigg( p(x)\frac{c_i}{p(x_i\cond x_{\setminus i})}\bigg) = \\ 
    = \sum_i\deriv{}{x_i} \bigg( p(x)\frac{c_i p(x_{\setminus i})}{p(x)}\bigg) = 0\,,
\end{align}
which ensures that the induced dynamics preserves the target density according to theorem \ref{thm:liouville}.

Thus, we can update all of the dimensions at once, and the dynamics adaptively determines the speed of each dimension: the speed is higher for lower conditional density.
Setting $c_i = 1$ for all $i$, we get a version of the algorithm described in \citep{suzuki2013monte}. In that paper it was conjectured that the algorithm actually preserves the target density, which immediately follows from our general theory.  
Thus, in the current paper, for the first time, we prove that the algorithm from \citep{suzuki2013monte} indeed preserves the target measure -- we discuss this in more detail in Section \ref{sec:related}.

\subsection{Ergodicity}

To apply a dynamical system to sampling, we need to cover the state space densely, i.e.~we need the system to be ergodic.
Ergodicity of the Gibbs dynamics depends on the coefficients $c_i$ as we now show.
In 1D any non-zero $c$ in \eqref{eq:1d_gibbs} will allow us to cover the interval $[0,1]$ densely under continuous time dynamics.
For the 2D case, we demonstrate here that by taking rationally independent $c_1$ and $c_2$ we guarantee ergodicity for any target density.
To show that we first need the following lemma.
\begin{lemma}\label{lemma:phase_curve}
Define the phase curve of a trajectory $x(t)$ as the curve that the trajectory draws on the manifold on which $x(t)$ lives.
For any continuous vector field $v(x) \in \mathbb{R}^n$ and positive scalar function $s(x)$, the solutions of the following ODEs
\begin{align}
    \frac{\dd x}{\dd t} = v(x), \;\mathrm{ and }\; \frac{\dd x}{\dd t} = s(x)v(x), \;\; s(x) > 0 \;\; \forall x
\end{align}
have the same phase curves.
\end{lemma}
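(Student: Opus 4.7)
The plan is to show that the two ODEs differ only by a strictly monotonic reparameterization of time. Since the phase curve is defined as the image of the trajectory (not as a parameterized curve), any such reparameterization leaves the phase curve invariant.

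Concretely, let $x(t)$ be the solution of $\dot{x} = v(x)$ with $x(0) = x_0$, defined on its maximal interval $I$. Because $s$ is continuous and strictly positive, the function
\begin{align}
    \tau(t) = \int_0^t \frac{\dd t'}{s(x(t'))}
\end{align}
is well-defined and $C^1$ on $I$, with $\dd \tau/\dd t = 1/s(x(t)) > 0$. Hence $\tau$ is a strictly increasing bijection from $I$ onto its image $J$, and its inverse $t(\tau)$ is $C^1$ with $\dd t/\dd \tau = s(x(t(\tau)))$. Define $y(\tau) = x(t(\tau))$. By the chain rule,
\begin{align}
    \frac{\dd y}{\dd \tau} = \frac{\dd x}{\dd t}\bigg|_{t=t(\tau)}\frac{\dd t}{\dd \tau} = v(x(t(\tau)))\, s(x(t(\tau))) = s(y)v(y),
\end{align}
so $y(\tau)$ solves $\dd x/\dd t = s(x)v(x)$ with the same initial point $y(0) = x_0$. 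The sets $\{x(t) : t \in I\}$ and $\{y(\tau) : \tau \in J\}$ coincide since $t \mapsto \tau(t)$ is a bijection, and these are precisely the phase curves of the two ODEs through $x_0$.

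The converse direction is symmetric: starting from a solution $y(\tau)$ of $\dot{y} = s(y)v(y)$, set $t(\tau) = \int_0^\tau s(y(\tau'))\,\dd \tau'$, which is again strictly increasing since $s > 0$, and define $x(t) = y(\tau(t))$; the same chain-rule computation shows $\dot{x} = v(x)$. Together, the two constructions give a bijection between solutions of the two ODEs through any given point, under which the underlying images agree. This establishes that the phase curves are identical.

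The only subtle point is that the time reparameterization $\tau(t)$ need not map $I$ onto all of $\mathbb{R}$: if $s$ becomes very small along the trajectory, $\tau$ may have a finite limit even when $t \to \infty$, or vice versa. This affects the \emph{speed} at which the trajectory is traversed but not the \emph{set} of points visited, which is exactly what the statement about phase curves requires. Hence the positivity of $s$ is all that is needed, with no further assumption on the dynamics.
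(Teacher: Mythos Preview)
Your proof is correct and follows essentially the same approach as the paper: both arguments show that the two flows differ by a strictly monotone time reparameterization, with the positivity of $s$ guaranteeing monotonicity. The only cosmetic difference is that you write the reparameterization explicitly as the integral $\tau(t)=\int_0^t \dd t'/s(x(t'))$, whereas the paper characterizes it as the solution of the auxiliary Cauchy problem $\dd\kappa/\dd t = 1/s(\psi(\kappa))$, $\kappa(t_0)=t_0$; these are the same object.
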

Rephrasing, drawing the phase curves for both systems and forgetting about time, we cannot distinguish these solutions (see proof in Appendix \ref{app:scaling_lemma}).
We can now prove the ergodicity condition for 2D.

\begin{lemma}[2D ergodicity]\label{lemma:2d_ergodicity}
The dynamical Gibbs sampler of Def.~\ref{def:dyn_gibbs} is ergodic in 2D if and only if $c_1$ and $c_2$ are rationally independent.
\end{lemma}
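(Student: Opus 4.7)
The plan is to reduce the 2D Dynamical Gibbs dynamics to a constant (``linear'') vector field on the 2-torus and then apply the classical Kronecker--Weyl equidistribution theorem.

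First, I would change variables via the marginal CDFs, setting $u_i = F_{p_i}(x_i)$ for $i=1,2$, where $p_i$ denotes the $i$-th marginal of $p$. Under the same identification as in Section \ref{sec:1d_case} that closes up $\pm\infty$, this is a diffeomorphism from the compactified state space onto the 2-torus $T^2 = (\mathbb{R}/\mathbb{Z})^2$. Writing $p(x_i \cond x_{\setminus i}) = p(x)/p_{\setminus i}(x_{\setminus i})$ and using $\dot u_i = p_i(x_i)\,\dot x_i$, the Gibbs vector field of Definition~\ref{def:dyn_gibbs} becomes, in $u$-coordinates,
\begin{align*}
\dot u_i = \frac{c_i\, p_1(x_1)\, p_2(x_2)}{p(x_1,x_2)} = \frac{c_i}{\pi(u_1,u_2)}, \quad i=1,2,
\end{align*}
where $\pi(u) := p(x)/\bigl(p_1(x_1)\,p_2(x_2)\bigr)$ is the copula density, a strictly positive function on $T^2$.

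Next, I would apply Lemma~\ref{lemma:phase_curve} with the positive scalar $s(u) = \pi(u)$ to conclude that the phase curves of the transformed system coincide with those of the constant linear flow $\dot u = (c_1,c_2)$ on $T^2$. The latter is a classical object: by the Kronecker--Weyl equidistribution theorem, the linear flow $(c_1,c_2)$ on $T^2$ is ergodic with respect to Lebesgue measure if and only if $c_1/c_2$ is irrational, equivalently if and only if $c_1$ and $c_2$ are rationally independent.

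Finally, I would transfer ergodicity back. The diffeomorphism $x \mapsto u$ pushes the target measure $p(x)\,dx$ forward to $\pi(u)\,du$, and time-rescaling by $\pi$ sends the Lebesgue-invariant constant flow $(c_1,c_2)$ to the $\pi\,du$-invariant flow $(c_1,c_2)/\pi$. Since invariant $\sigma$-algebras depend only on orbits and not on parametrization, and since $\pi > 0$ makes $\pi\,du$ and $du$ mutually absolutely continuous, ergodicity is equivalent in both settings. Hence the Gibbs flow on the original state space is ergodic with respect to $p(x)\,dx$ iff the linear flow is ergodic with respect to Lebesgue. The main subtlety will be carefully justifying that time-rescaling preserves \emph{measure-theoretic} ergodicity rather than merely topological density of orbits; for the converse direction, one must also exhibit a non-trivial invariant set when $c_1/c_2 \in \mathbb{Q}$, which is straightforward since the linear flow on $T^2$ then has closed orbits whose pre-images under the CDF change of variables form invariant tubes of intermediate $p$-measure.
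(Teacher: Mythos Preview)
Your proposal is correct and follows essentially the same route as the paper: change variables by the marginal CDFs to land on $T^2$, use Lemma~\ref{lemma:phase_curve} to strip off a positive scalar factor and reduce to the constant flow $(c_1,c_2)$, then invoke the classical irrational-rotation result. The paper applies the scaling lemma twice (multiplying by $p$ before the CDF change of variables and dividing by $p_1p_2$ after) rather than once via the copula density as you do, and it stops at density of orbits without your more careful discussion of measure-theoretic ergodicity and the converse direction, but the argument is otherwise identical.
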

\begin{proof}
From lemma \ref{lemma:phase_curve} we know that
to study ergodicity we can multiply equation \eqref{eq:dgibbs} by $p(x)$ and not change the phase-curves. In 2D, this yields
\begin{align}
    \frac{\dd x_1}{\dd t} 
    = c_1
    p(x_2),
    \quad
    \frac{\dd x_2}{\dd t} = c_2p(x_1)
    \,.
\end{align}
We then transform to new coordinates $(u_1,u_2) = (F_1(x_1), F_2(x_2))$, where $F_i$ is the CDF of the $i$-th marginal.
The dynamical equations transform as
\begin{align}
    \frac{\dd u_1}{\dd t} 
    = c_1p(x_1)p(x_2),\quad 
    \frac{\dd u_2}{\dd t} 
    = c_2p(x_1)p(x_2)
    \,.
\end{align}
Applying the lemma \ref{lemma:phase_curve} once again  to divide the vector field by $p(x_1)p(x_2)$, we obtain a rotation on the torus with speeds $c_1$ and $c_2$, which is dense when $c_1$ and $c_2$ are rationally independent \cite{fomin1981ergodic}.
\end{proof}

We have the following more general result:
\begin{lemma}[Ergodicity for factorized marginal]
If any marginal $p(x_{\setminus i})$ factorizes as a product of distributions along each dimension ($n-1$-wise independent distributions), 
all coefficients $c_i$ must be rationally independent for ergodicity to hold.
\end{lemma}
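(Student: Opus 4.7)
The plan is to generalize the argument of Lemma \ref{lemma:2d_ergodicity}: by successive applications of the scaling Lemma \ref{lemma:phase_curve} combined with CDF-type coordinate changes, I aim to reduce the phase curves of the deterministic Gibbs flow to those of a constant-velocity translation on the torus $\mathbb{T}^n$ with velocity vector $(c_1, \dots, c_n)$. For such a flow the classical Kronecker--Weyl criterion states that density of orbits (and hence ergodicity) is equivalent to rational independence of the components, so the contrapositive yields that any rational dependence among the $c_j$'s obstructs ergodicity, which is exactly the claim.

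Concretely, I first multiply the vector field in \eqref{eq:dgibbs} by the positive function $p(x)$, using Lemma \ref{lemma:phase_curve}, to obtain $\dot x_j = c_j\, p(x_{\setminus j})$. The factorization hypothesis $p(x_{\setminus i}) = \prod_{k \neq i} p_k(x_k)$ immediately gives $p(x_{\setminus \{i, j\}}) = \prod_{k \neq i, j} p_k(x_k)$ for every $j \neq i$, and therefore $p(x_{\setminus j}) = p(x_i \mid x_{\setminus \{i, j\}}) \prod_{k \neq i, j} p_k(x_k)$. Applying Lemma \ref{lemma:phase_curve} a second time, dividing the vector field by the positive function $\prod_{k \neq i} p_k(x_k)$, yields the simplified system
\begin{equation*}
\dot x_i = c_i, \qquad \dot x_j = c_j\,\frac{p(x_i \mid x_{\setminus \{i, j\}})}{p_j(x_j)}, \quad j \neq i.
\end{equation*}
Passing to the marginal CDF coordinates $u_j = F_j(x_j)$ on every axis and compactifying each $u_j$ modulo $1$ as in Section \ref{sec:1d_case}, I then aim to reach the constant-velocity flow $\dot u_j = c_j$ on $\mathbb{T}^n$ after one last invocation of Lemma \ref{lemma:phase_curve} to divide out a common positive factor proportional to $p_i(x_i)$.

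The main obstacle lies in this last step: unlike the two-dimensional case, the coefficient $p(x_i \mid x_{\setminus \{i, j\}})/p_j(x_j)$ does not agree pointwise across components $j$, so a single scalar rescaling via Lemma \ref{lemma:phase_curve} cannot simultaneously constantify every $\dot x_j$. The $n-1$-wise independence hypothesis is what makes the reduction possible, because it guarantees that the invariant marginal of $x_{\setminus \{i, j\}}$ is exactly $\prod_{k \neq i, j} p_k(x_k)$, so that integrating $p(x_i \mid x_{\setminus \{i, j\}})$ against this marginal collapses it to $p_i(x_i)$. Carrying out this averaging rigorously at the level of phase curves, so that ergodicity of the reduced torus flow transfers back to the original ODE flow, is the technical heart of the argument; once it is done, the Kronecker--Weyl criterion closes the proof.
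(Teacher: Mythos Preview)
Your proposal contains a genuine gap that stems from misreading the hypothesis. The parenthetical ``$n-1$-wise independent distributions'' means that \emph{every} subset of $n-1$ coordinates is jointly independent, i.e.\ for \emph{each} index $j$ the marginal factorizes as $p(x_{\setminus j})=\prod_{k\neq j}p_k(x_k)$, not merely for a single distinguished $i$. Under this correct reading the reduction is immediate and exactly parallels Lemma~\ref{lemma:2d_ergodicity}: after multiplying \eqref{eq:dgibbs} by $p(x)$ one obtains
\[
\dot x_j \;=\; c_j\,p(x_{\setminus j}) \;=\; c_j\prod_{k\neq j}p_k(x_k)\qquad\text{for every }j,
\]
and the CDF change $u_j=F_j(x_j)$ yields $\dot u_j = c_j\prod_{k}p_k(x_k)$ for \emph{all} $j$. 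A single invocation of Lemma~\ref{lemma:phase_curve} removes the common positive factor $\prod_k p_k(x_k)$ and delivers the constant torus flow $\dot u_j=c_j$, after which Kronecker--Weyl finishes the argument. This is precisely ``the same technique'' the paper refers to.

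By assuming only one marginal $p(x_{\setminus i})$ factorizes, you land on components $\dot u_j = c_j\,p(x_i\mid x_{\setminus\{i,j\}})$ for $j\neq i$ that genuinely differ across $j$, and no scalar time-reparametrization can fix this: Lemma~\ref{lemma:phase_curve} only licenses multiplication of the \emph{entire} vector field by one positive function. Your proposed remedy, ``averaging $p(x_i\mid x_{\setminus\{i,j\}})$ against the marginal of $x_{\setminus\{i,j\}}$ at the level of phase curves,'' is not an operation that preserves phase portraits, and you yourself flag it as the unproved ``technical heart.'' In short, the obstacle you identify is an artifact of the weaker hypothesis you adopted; with the intended $n-1$-wise independence it never arises.
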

\begin{proof}
The same technique used in lemma \ref{lemma:2d_ergodicity} can be applied in this case to see that we also map to a rotation on the $n$-dimensional torus where rational independence needs to hold for ergodicity.
\end{proof}

Unfortunately, the technique used to prove these lemmas cannot be applied to arbitrary high-dimensional distributions. We conjecture that in general ergodicity does not depend on the target density and depends only on the coefficients $c_i$.
That is why, in implementations of the algorithm, we set all of the coefficients as square roots of prime numbers (see \citep{besicovitch1940linear} for the proof of their independence).

\section{Dynamical Gibbs for Discrete State Spaces}

\begin{figure}
    \centering
    \includegraphics[width=0.23\textwidth]{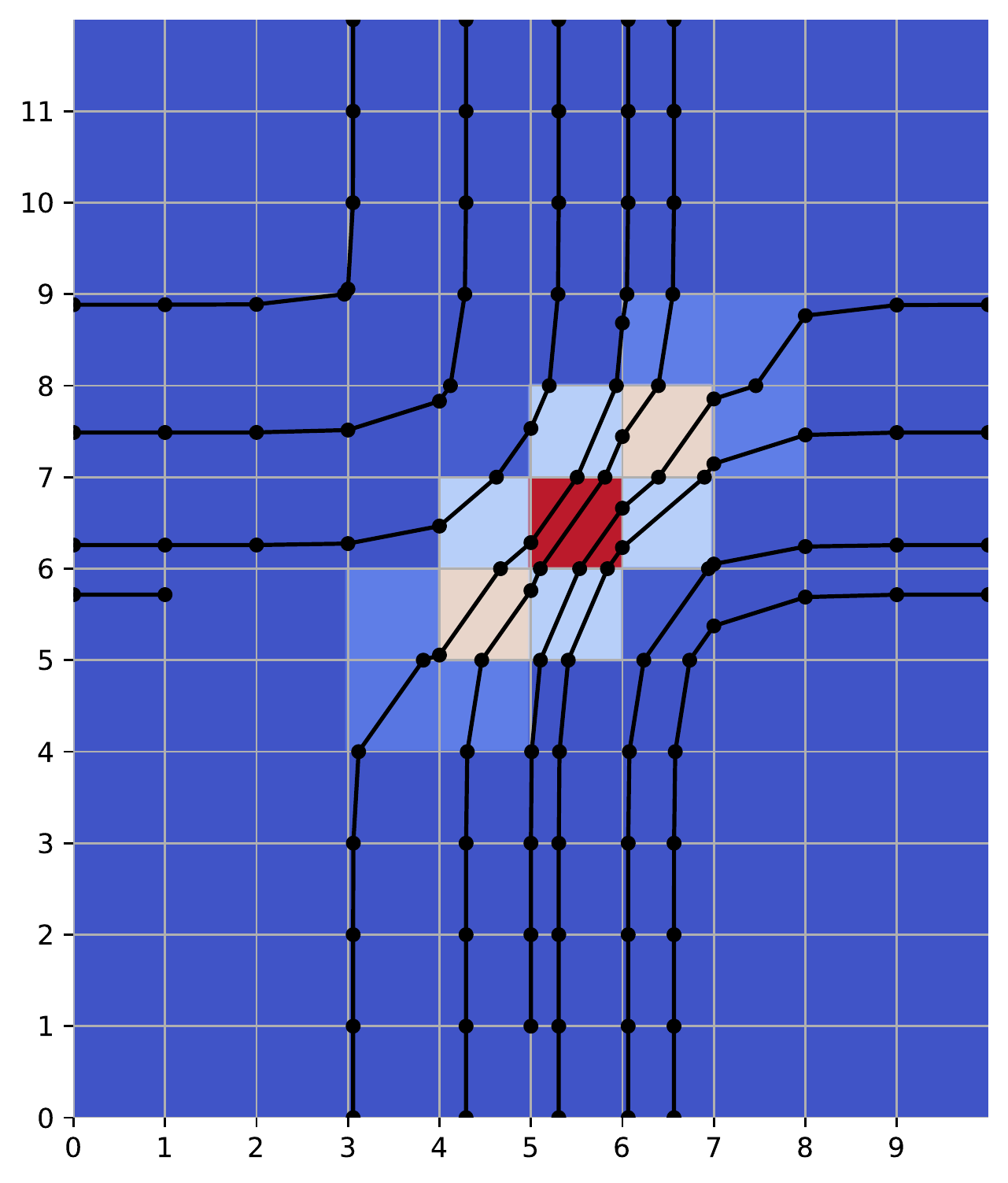}
    \includegraphics[width=0.23\textwidth]{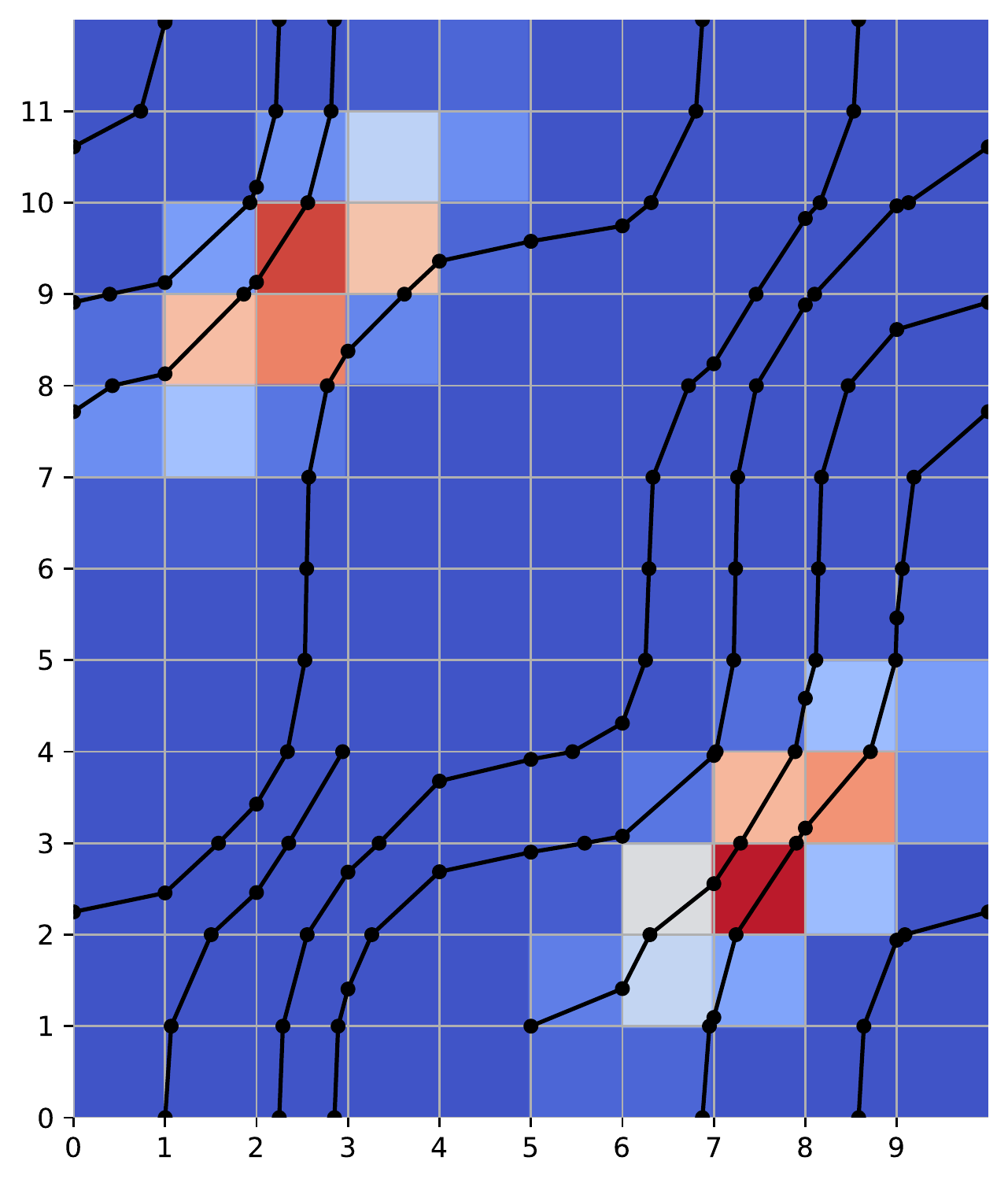}
    \caption{Illustrations for the discrete state space. Red cells correspond to higher probability regions. Black dots correspond to the change points and black lines correpond to the trajectory inside each cell.}
    \label{fig:2d_illustrations}
\end{figure}

Another merit of the proposed scheme is that it does not use any information about gradients of the target density.
In this section we demonstrate that this property can be leveraged for sampling from discrete random variables.

We first map the discrete distribution to the continuous domain as follows (known as ``dequantization'').
We assume that each discrete variable $x_i$ takes integer values $[0,1,\ldots,d_i-1]$.
Then we set the density for all points in the hypercube $\{x+u: u\in[0,1]^n\}$ to $p(x)$.
Thus, we split the whole space into cells of constant density in the continuous domain.
Clearly, sampling from such distribution is equivalent to sampling from the corresponding discrete distribution: to obtain samples from the original discrete distribution we can simply take $\lfloor x\rfloor$ for all collected ``continuous'' samples.
Note, however, that the Liouville's theorem cannot be applied straightforwardly to this discretized case since the vector field is not continuous anymore.
In Appendix \ref{app:disc_liouville} we provide a proof of Liouville's theorem for this discontinuous case.

The discrete case has two practical advantages compared to the continuous case.
Firstly, we don't need to apply any transformation to obtain the distribution on the $n$-dimensional torus as we discussed in Section \ref{sec:dgibbs_continuous}.
The torus is naturally obtained just by considering all the variables $x_i$ modulo $d_i$.
Secondly, we can exactly integrate the dynamics inside each cell since the velocity inside each cell is constant, i.e. the particle moves in a straight line, and we simply jump between the boundaries of the cells. Since the whole trajectory inside each cell corresponds to a single discrete state $\lfloor x\rfloor$, we simply weight each sample by the time $dt$ that it spent inside that cell, equal to the length of the trajectory divided by the speed. This procedure of generating an importance weighted set of samples is efficient and largely avoids the accumulation of numerical integration errors. 

Denoting the discrete sample as $x[k]$ and the time spent in it as $t_k$, the mean of function $f$ can be estimated as
\begin{align}
    \mathbb{E}_{p(x)}f(x) \simeq \frac{1}{\sum_k t_k}\sum_k t_k f(x[k]).
\end{align}
Finally, to provide the reader with more intuition we demonstrate the resulting dynamics in 2D case (see Fig. \ref{fig:2d_illustrations}).


\section[Classification of Deterministic Samplers]{Classification of Deterministic Samplers\footnote{This section assumes background in differential geometry, but can be skipped without significant loss for the exposition of the proposed algorithm.}}
\label{sec:hodge}

As Liouville theorem (Thm.~\ref{thm:liouville}) shows, the only thing one needs to design a measure-preserving flow is to design a divergence free vector field.
Indeed, given the vector field $w: \text{div}(w) = 0$ we can easily obtain the measure-preserving vector field $v(x) = w(x)/p(x)$ for which we have $\text{div}(pv) = 0$.
This fact leads us to the question: how can we describe the class of the vector fields $\{w: \text{div}(w) = 0\}$. This question can be approached via particular Hodge theory. 

\subsection{Background}

To state the results, we first give some basic definitions, see \cite{vogtmann1997mathematical,nakahara2003geometry} for more background. Given a vector space $V$, an exterior $k$-form is an antisymmetric linear map from $V^{\times k}$ to $\mathbb{R}$. For example, in $V=\mathbb{R}^2$, a $2$-form is the oriented area of the parallegram constructed from two vectors.
In terms of the basis $e_i$, $i=1,\dots,n$ of $V$, a $k$-form is $\omega = \sum a_{i_1,\dots,i_k} e_{i_1} \wedge \cdots \wedge e_{i_k}$, with 
$\wedge$ the exterior product, $a\wedge b = -b\wedge a$ and the summation is over $i_1<\cdots <i_k$.
The Hodge star operator \cite{nakahara2003geometry} \footnote{The reader familiar with quantum physics might appreciate that $k$-forms describe states of fermions, the exterior product implementing Pauli exclusion principle, and $\star$ is a particle-hole transformation.} maps $k$-forms to $(n-k)$-forms:
\begin{align}
\star 
e_{i_1} \wedge \cdots \wedge e_{i_k} = 
(-1)^{\sigma} 
e_{i_1'} \wedge \cdots \wedge e_{i_{n-k}'}
\end{align}
with $\{i'_j\}$ being the complementary set to $\{i_j\}$ and $\sigma$ the sign of the permutation $(i_1,\dots,i_k,i'_1,\dots,i'_{n-k})$.

For a manifold $M$ we define a differential $k$-form at a point $x\in M$ as an exterior form on the tangent space to the manifold $TM_x$.
So in local coordinates, a differential $k$-form reads
$\alpha(x) = \sum a_{i_1,\dots,i_k}(x) \dd x_{i_1} \wedge \cdots \wedge \dd x_{i_k} $,
where $\dd x_i$ is a basis of the dual $(TM_x)^*$. Denote $\Omega^k(M)$ the space of $M$ forms.
The exterior derivative is $\dd : \Omega^k(M) \to \Omega^{k+1}(M)$. It is defined for $f\in \Omega^0(M)$ (i.e.~a function on the manifold) as the differential $\dd f (x) = \sum_i \partial_i f \dd x_i$ and extended to general forms as
$
\dd \alpha = 
\sum \dd a_{i_1,\dots,i_k}\wedge \dd x_{i_1} \wedge \cdots \wedge \dd x_{i_k}
$. Crucially, $\dd^2 = 0$. 
$\Omega^1(M)$ can be identified with the space of vector fields under the isomorphism $\sharp$ which associates covectors to vectors. For example, $(\dd f)^\sharp = \nabla f$. The inverse isomorphism is denoted $\flat$.
The codifferential $\delta = \star \dd \star$ plays the role of the divergence:
given $\omega=\sum v_i \dd x_i \in\Omega^1(M)$ in local coordinates, we have
$\delta\omega = \text{div}(v)$ as can be easily verified.

\subsection{Example: divergence-less fields in $\mathbb{R}^3$}
This formal machinery pays off since it gives an economical and general description of divergeless vector fields. To illustrate the formalism and motivate the following developments, we rederive the result $\text{div} (\text{curl}(F) ) = 0$ for $F$ a vector field in $M=\mathbb{R}^3$. (Recall that $\text{curl}(F)_i=\epsilon_{ijk}\partial_j F_k$ using the Einstein convention to sum over repeated indices.)
The following statement can be easily verified. (See Appendix \ref{app:hodge} for the proof.)

\begin{lemma}
Given the vector field $F = (F_1,F_2,F_3)$ and $\alpha = v^\sharp = \sum_{i=1}^3 F_i \dd x_i$, we have ${\rm curl}(F)^\sharp = \star \dd \alpha$.
\end{lemma}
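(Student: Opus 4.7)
The proof is a direct calculation in coordinates on $\mathbb{R}^3$, with no conceptual subtleties: I would expand $\dd\alpha$, apply $\star$ term by term, and match the result against the definition of the curl. The only thing to watch is the signs coming from antisymmetry of $\wedge$ and from the permutation sign in the definition of $\star$.

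First, using the definition of the exterior derivative on a 1-form, I would write
\begin{align}
\dd\alpha = \sum_i \dd F_i \wedge \dd x_i = \sum_{i,j} \partial_j F_i \, \dd x_j \wedge \dd x_i,
\end{align}
and collect into the ordered basis $\{\dd x_i \wedge \dd x_j : i<j\}$ of 2-forms using $\dd x_i \wedge \dd x_i = 0$ and $\dd x_j\wedge \dd x_i = -\dd x_i \wedge \dd x_j$. This produces exactly three coefficients: $\partial_1 F_2 - \partial_2 F_1$ on $\dd x_1\wedge \dd x_2$, $\partial_1 F_3 - \partial_3 F_1$ on $\dd x_1 \wedge \dd x_3$, and $\partial_2 F_3 - \partial_3 F_2$ on $\dd x_2 \wedge \dd x_3$.

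Next I would apply the Hodge star, which on 2-forms in $\mathbb{R}^3$ gives $\star(\dd x_1\wedge \dd x_2) = \dd x_3$, $\star(\dd x_1\wedge \dd x_3) = -\dd x_2$ (the permutation $(1,3,2)$ is odd), and $\star(\dd x_2\wedge \dd x_3) = \dd x_1$, directly from the formula in the excerpt. Substituting and collecting yields
\begin{align}
\star \dd \alpha = (\partial_2 F_3 - \partial_3 F_2)\,\dd x_1 + (\partial_3 F_1 - \partial_1 F_3)\,\dd x_2 + (\partial_1 F_2 - \partial_2 F_1)\,\dd x_3,
\end{align}
whose coefficients are by definition the components of $\mathrm{curl}(F) = (\epsilon_{ijk}\partial_j F_k)_i$. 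Hence $\star\dd\alpha = \sum_i (\mathrm{curl}(F))_i \, \dd x_i$, which under the musical isomorphism is the 1-form identified with $\mathrm{curl}(F)$, giving the claim. The only obstacle is careful bookkeeping of signs, which is mechanical; as a sanity check, the identity $\mathrm{div}(\mathrm{curl}(F))=0$ then drops out from $\delta\star \dd\alpha = \star \dd \star \star \dd \alpha = \star \dd^2 \alpha = 0$, using $\star^2 = 1$ on 2-forms of $\mathbb{R}^3$.
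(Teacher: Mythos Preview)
Your proof is correct and is exactly the direct coordinate verification the paper has in mind; indeed, the paper merely says the statement ``can be easily verified'' and, despite pointing to Appendix~\ref{app:hodge}, does not actually spell out the computation there. Your expansion of $\dd\alpha$, the application of $\star$ with the correct signs, and the identification with the components $\epsilon_{ijk}\partial_j F_k$ are all accurate, and your closing sanity check via $\dd^2=0$ matches the paper's own use of the lemma.
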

This allows us to show that $\delta \star \dd \alpha = \star \dd^2 \alpha = 0$ proving the classical result $\text{div} (\text{curl}(F) ) = 0$. The converse result, that any divergence-less field in $\mathbb{R}^3$ can be written as $v=\text{curl}(F)$, is known as Helmholtz's theorem. In fact, one can show that all divergence-less fields are of this form for the more general case of $\mathbb{R}^n$ as we discuss next.

\subsection{The case of $\mathbf{\mathbb{R}^n}$ (Poincare lemma)}

As already remarked we can work with one forms instead of vector fields and the divergence of $\omega \in \Omega^{1}(\mathbb{R}^n)$ is $\delta\omega = \star \dd \star \omega$.
Hence, the set of divergenceless forms is defined as $\delta\omega = 0$.
Clearly if there exists $\alpha \in \Omega^{n-2}(\mathbb{R}^n)$ such that $\star \omega = \dd \alpha\in \Omega^{n-1}$, then $\delta\omega  = 0$ since $\dd^2=0$.
Poincar\'e lemma \cite{vogtmann1997mathematical} states that the converse is also true: any divergenceless field can be written as $\omega = \star \dd \alpha$ or $\omega=\delta \beta$, for a given 
$\beta \in \Omega^{2}(\mathbb{R}^n)$. In terms of vector fields of our setting:
\begin{align}
    v = \frac{1}{p}(\delta \beta)^\sharp \,.
\end{align}
Due to antisymmetry, the dimensionality of $2$-forms in $\mathbb{R}^n$ is $n!/(2 (n-2)!)$, i.e. any measure preserving flow can be defined by $n!/(2 (n-2)!)$ continuous functions.

\subsection{The case of compact manifolds (Hodge decomposition)}

When the state space is given by a compact manifold $M$, we can use the Hodge decomposition instead. It states that $\omega \in \Omega^{1}(M)$ such that $\delta \omega=0$ can be written as $\gamma + \delta\beta$, with $\gamma\in\Omega^1(M)$ harmonic, i.e.~$\dd \gamma = \delta\gamma = 0$. We do not give further justification here for this result and refer the interested reader to \cite{nakahara2003geometry}.
Therefore one can design the vector field $v$ preserving $p$ as
\begin{align}
    v = \frac{1}{p}(\gamma + \delta\beta)^\sharp,
    \label{eq:v_hodge}
\end{align}
where we take any $\beta \in \Omega^2(M)$ and the harmonic form $\gamma \in \Omega^1(M)$.
Moreover, this representation of the sampler is unique.
As in the HMC example of Sec.~\ref{sec:hmc}, the family of samplers is independent of $p$, i.e. we can use the vector field $(\gamma + \delta\beta)^\sharp$ to sample from any density just by multiplying by $1/p$.
However, the efficiency of the sampler depends on the field we have chosen.

Compared to the case of $\mathbb{R}^n$ we have another component here --- harmonic vector fields.
The dimensionality of this component $\text{ker}_{\Delta}(\Omega^1(M))$ is given by the corresponding Betti number $\beta_1(M)$ \cite{nakahara2003geometry}.
For instance, for the $n$-dimensional torus $S^n$, $\beta_1(M) = n$.

\textbf{Decomposition of a sampler.}

The Hodge decomposition allows us not only to design samplers, but decompose existent.
Indeed, given the $1$-form $\omega$ we can find its components by solving the following optimization problem
\begin{align}
  \beta = \argmin_{\widehat{\beta}} \norm{(vp)^\flat-\delta\widehat{\beta}},  
\end{align}
then the harmonic component is given by $\gamma = (vp)^\flat-\delta\beta$.
For the characterization of dynamical Gibbs see Appendix \ref{app:hodge}.
We leave the study harmonic samplers as an interesting future direction for our work.


\section{Related Work}
\label{sec:related}

The works most related to ours are \citep{murray2012driving},
\cite{neal2012view}, and \cite{suzuki2013chaotic}.

\citep{murray2012driving} integrates the equations of motion in 1D as discussed at the end of \ref{sec:1d_case} and uses the resulting update scheme as plug and play in several MCMC samplers. 
Our formulation in terms of differential equations has one important benefit – we don't need to evaluate the CDF and its inverse anymore. 

\cite{suzuki2013chaotic} investigates deterministic samplers based on ODEs for discrete problems. Our work is closely related, but we improve on 
\cite{suzuki2013chaotic} in several directions: 1) we prove that the sampler preserves the target distribution; 2) we introduce coefficients $c_i$ that are required for ergodicity; 3) we give a general classification of all deterministic samplers.
In the experimental section and in the Appendix \ref{app:experiments} we provide the empirical evidence that the algorithm of \cite{suzuki2013chaotic} is not ergodic.

\section[Experiments]{Experiments\footnote{code reproducing all experiments is available at \href{https://github.com/necludov/continuous-gibbs}{github.com/necludov/continuous-gibbs}}}
\label{sec:experiments}

In our experiments, we focused on discrete random variables since it allows us to estimate the performance of the proposed algorithm without the introduction of numerical errors, and any accept-reject tests to alleviate these errors.
We provide the code for all experiments in the supplementary material.

\subsection{2D discrete distribution}

We first illustrate the performance of our algorithm on 2D discrete random variable. 
For the target distribution we take the grayscale portrait of Gibbs (Fig. \ref{fig:portrait_asymptotics} right) and treat each pixel value as the probability of its coordinates to appear.
The resolution of the image is $237\times 178$, thus we have the joint distribution of two discrete random variables: the row number $\in [0,\ldots,236]$ and the column number $\in[0, \ldots, 177]$.
For the convergence speed, we evaluate the true mean values of the coordinates and then evaluate the error of its estimation with the number of samples.
Each sample corresponds to one update of a dimension for our algorithm and for the Gibbs sampler, while the independent sampler updates both coordinates at a time.

\begin{figure}[t]
    \centering
    \includegraphics[width=0.27\textwidth]{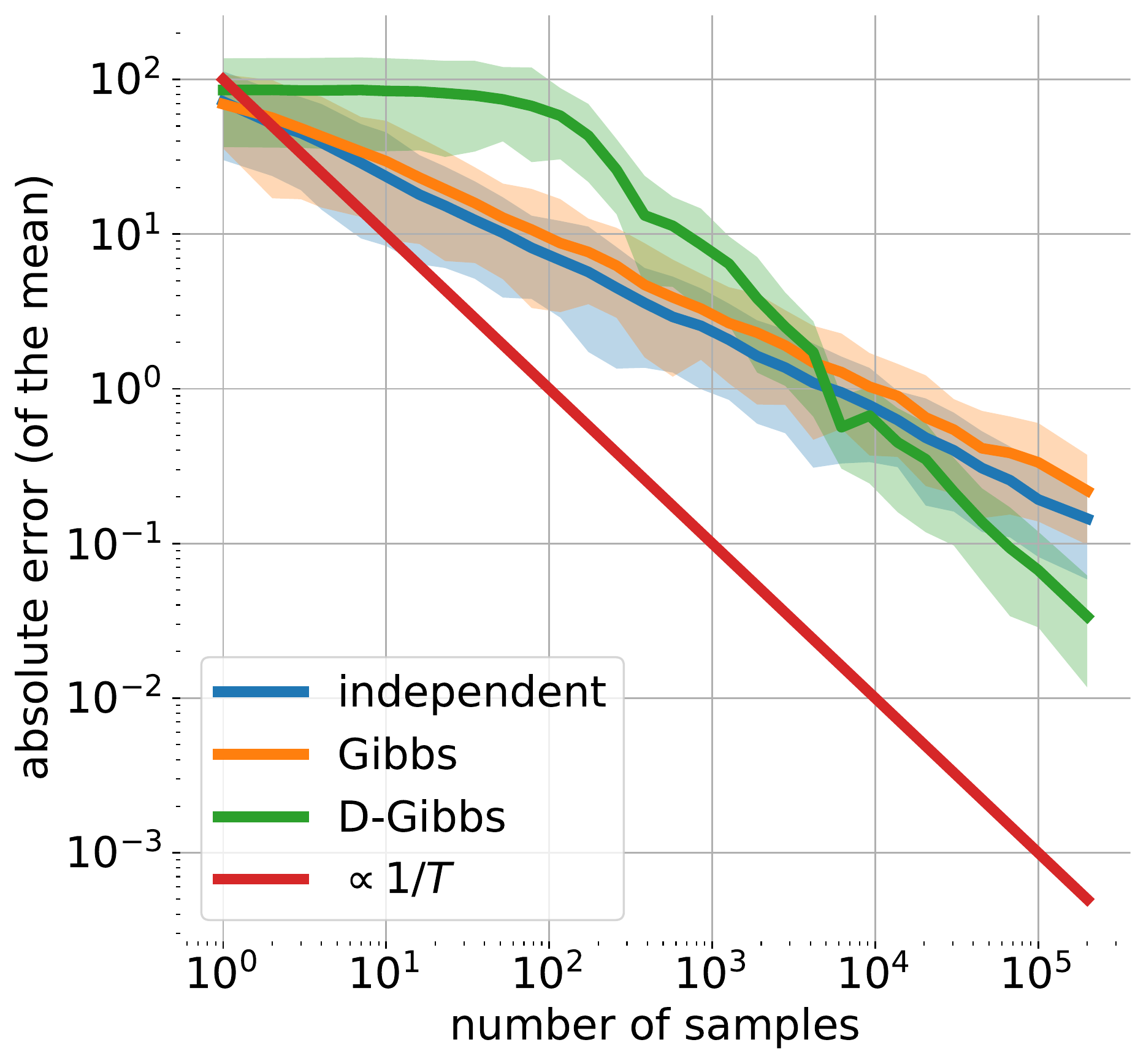}
    \includegraphics[width=0.18\textwidth]{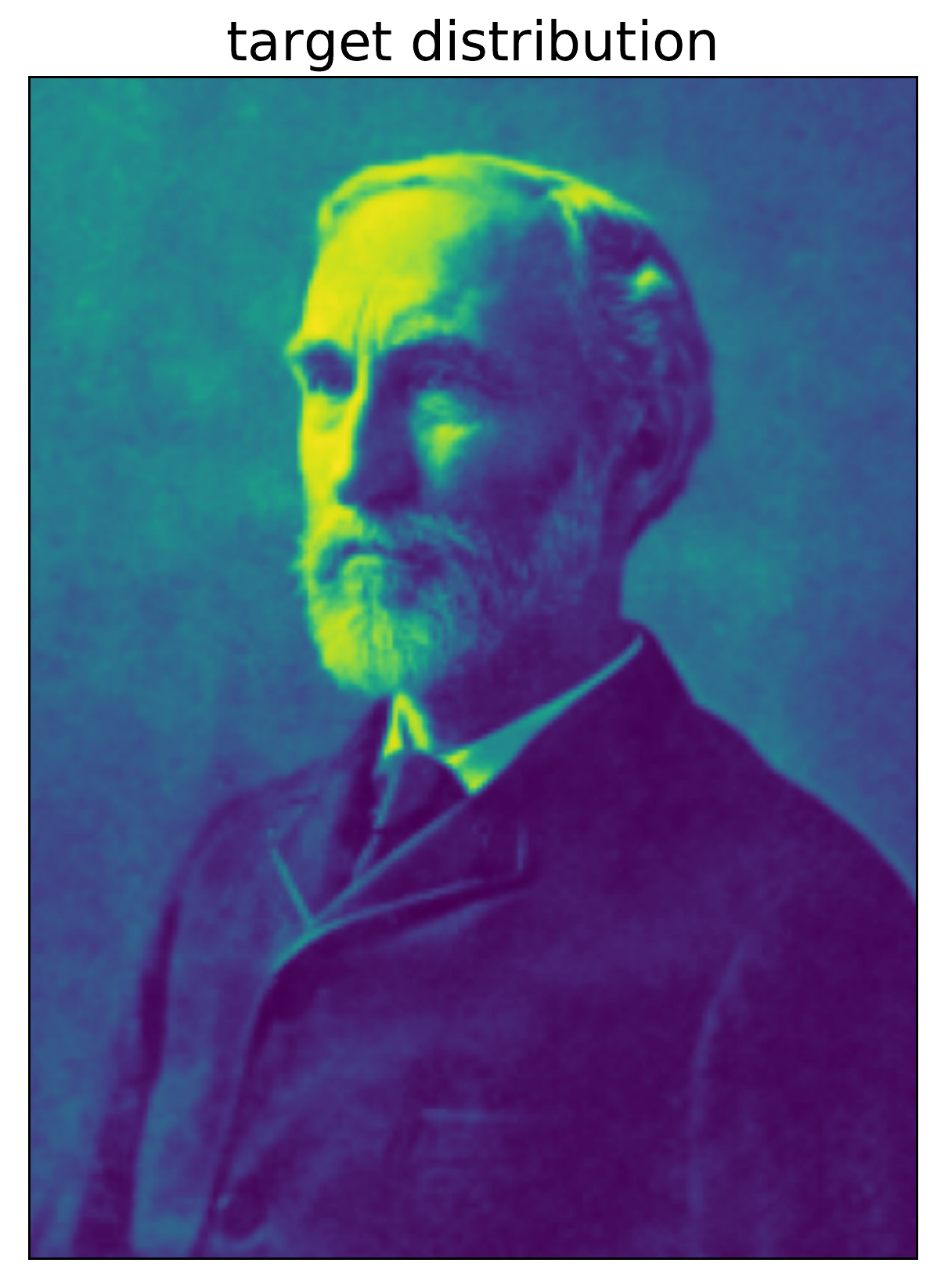}
    \caption{Asymptotics of the convergence of the estimated mean to the true mean of the 2D distribution (on the right), averaged across $10^3$ independent runs. The shaded area depicts $0.1$ and $0.9$ quantiles.}
    \label{fig:portrait_asymptotics}
\end{figure}

In Fig. \ref{fig:portrait_asymptotics}, we compare the proposed algorithm (D-Gibbs) against independent sampling and Gibbs algorithm.
The main takeaway of this experiment is that our algorithm has better asymptotic performance ($1/T$ convergence) even than the independent sampler ($1/\sqrt{T}$ convergence).
This result is evident even on the 2D histograms of samples (Fig. \ref{fig:portrait} in the introduction).
We provide the histograms for Gibbs sampling and for the algorithm proposed in \citep{suzuki2013monte} in Appendix \ref{app:experiments_portrait}.
There one can see that Suzuki's algorithm is not ergodic for 2D, as we prove in Lemma \ref{lemma:2d_ergodicity}.

\subsection{Ising model}

In the next experiment we consider the Ising model on the square lattice.
That is, each point of the square lattice ($28\times28$) is associated with the spin value $\sigma_i \in \{-1,+1\}$.
The total energy of the system is evaluated as 
\begin{align}
    E(\sigma) = \sum_{(i,j) \in \text{lattice}} \sigma_i\sigma_j,
\end{align}
where the interactions of spins are evaluated only for adjacent nodes of the square lattice. The unnormalized density of the Boltzmann distribution is then given by $p(\sigma) \propto \exp(-E(\sigma))$. Thus, the low energy values are obtained for checkerboard patterns of the spins, i.e. when adjacent spins have the opposite values.

In Fig. \ref{fig:ising}, we report the convergence speed for three algorithms: the Gibbs sampling, dHMC \citep{nishimura2020discontinuous}, and the dynamical Gibbs (ours).
In practice, samplers usually converge to the one of the global modes (a checkerboard) and stay in its vicinity.
To compute the error of the mean estimation, we check to which mode the sampler has converged and set this value as the true mean value.
The dynamical Gibbs outperforms the other sampling algorithms by converging faster to states with high probability and estimating the mean value more precisely.
The running time of all algorithms is approximately the same, however, to take the computational efforts into account we provide the same plots but in terms of running time (see Appendix \ref{app:experiments_ising}).

\begin{figure}[t]
    \centering
    \includegraphics[width=0.23\textwidth]{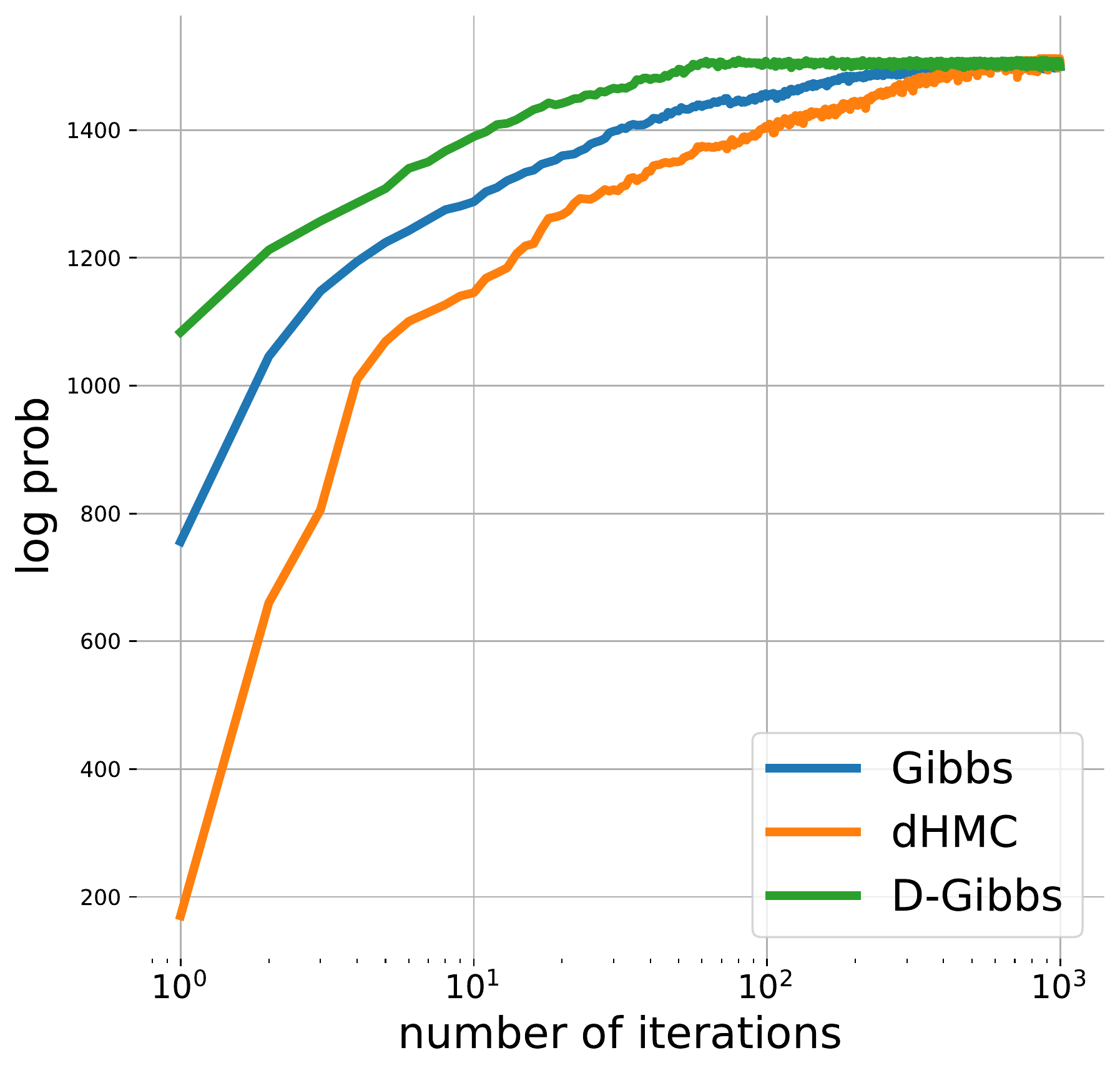}
    \includegraphics[width=0.23\textwidth]{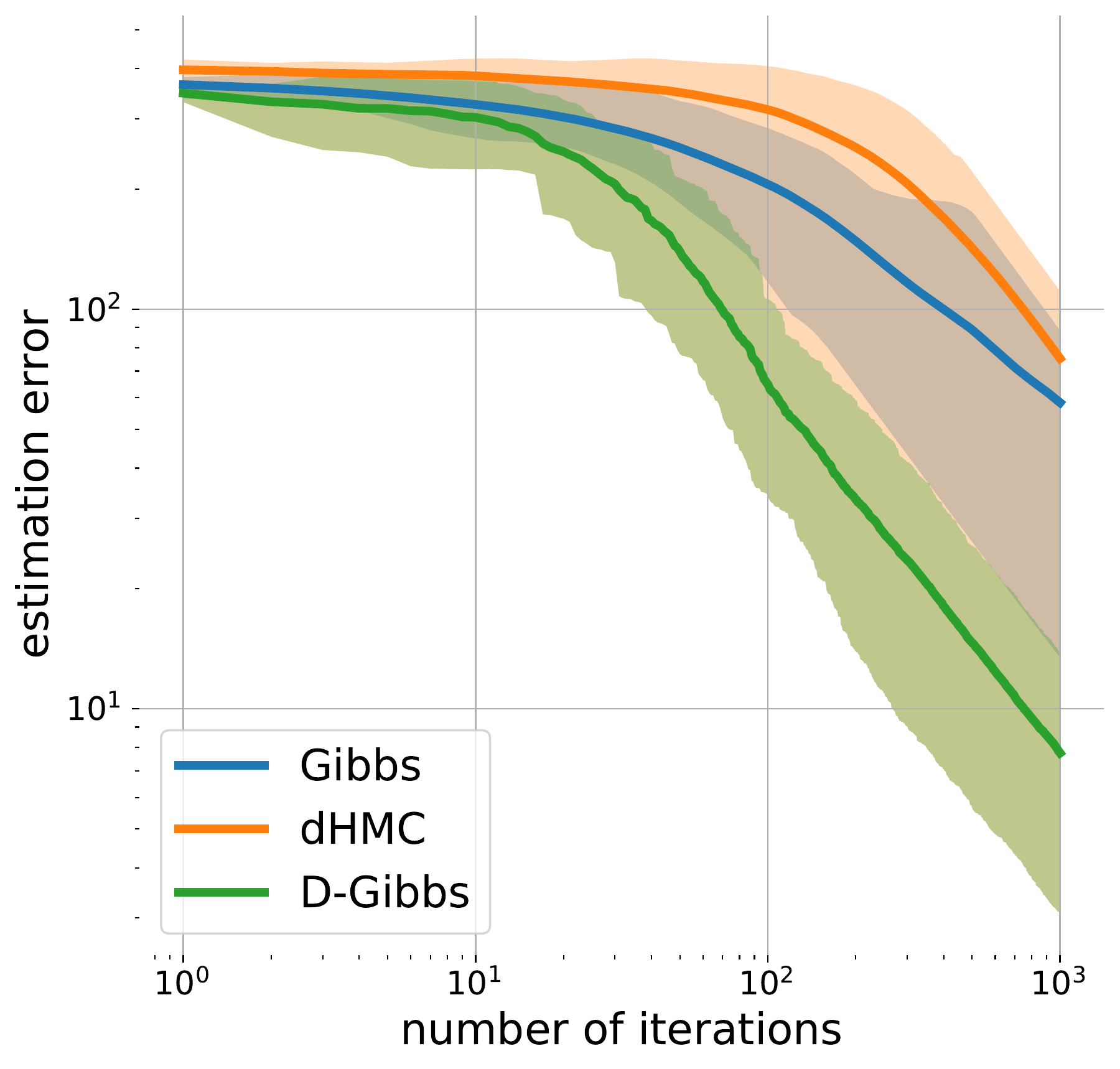}
    \caption{Performance of the samplers for the Ising model. Solid lines demonstrate the mean averaged across $10$ independent runs.}
    \label{fig:ising}
\vspace{-10pt}
\end{figure}

The fast convergence to high probability states motivates another application of the proposed algorithm --- image denoising.
For this purpose, we employ the model from \citep{bishop2006pattern}, which is also the Ising model, but with a slightly different energy:
\begin{align}
    E(\sigma) = -\beta\sum_{(i,j) \in \text{lattice}} \sigma_i\sigma_j -\eta \sum_i \sigma_i \xi_i,
\end{align}
where $\sigma_i \in \{-1,+1\}$ are the values of the state (spins configuration), $\xi_i \in \{-1,+1\}$ are the pixel values of the binarized image, and $\beta=1.0,\eta=2.1$ are the hyperparameters of the model.
Unlike the previous example, the first term here favours the adjacent spins to have the same value and plays the role of smoothing, while the second favours the spins to have the same values as pixels to save the content of the image.

\begin{figure*}[t]
    \centering
    \includegraphics[width=0.95\textwidth]{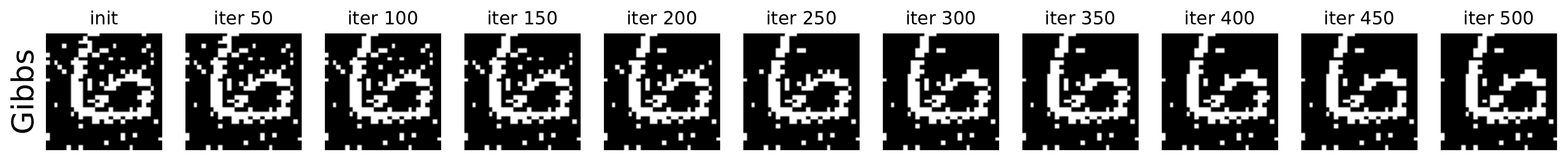}
    \includegraphics[width=0.95\textwidth]{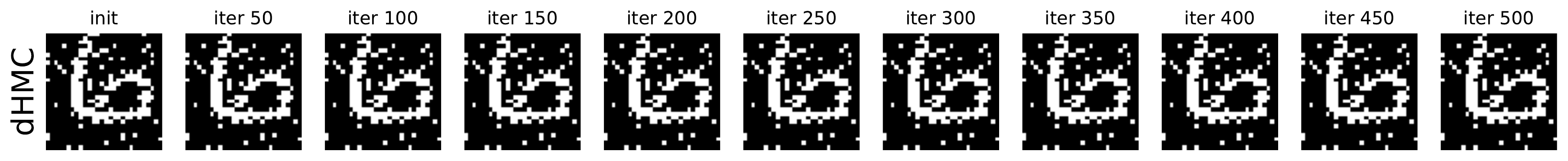}
    \includegraphics[width=0.95\textwidth]{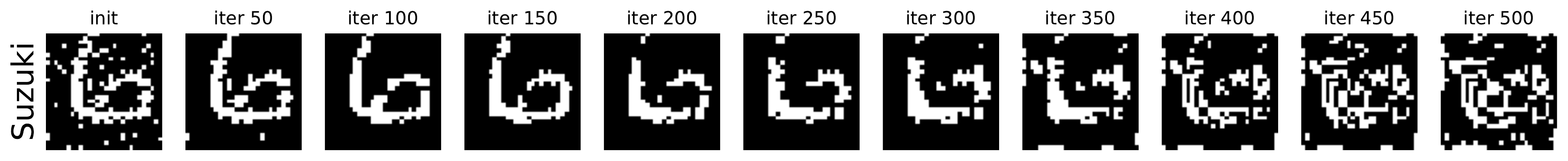}
    \includegraphics[width=0.95\textwidth]{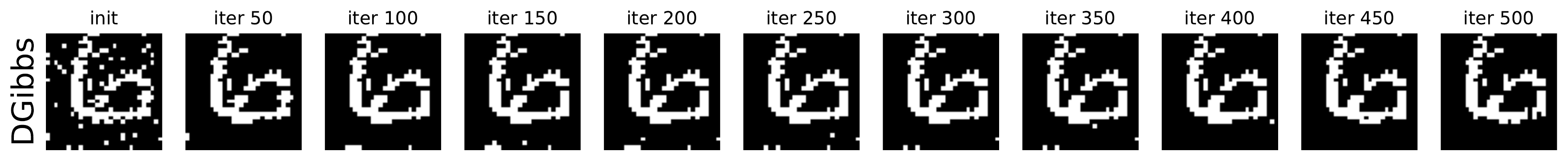}
    \caption{Image denoising via the Ising model. Every iteration correponds to a single update of dimension for all algorithms.}
    \label{fig:denoising}
\end{figure*}
\begin{figure*}[t]
    \centering
    \includegraphics[width=0.23\textwidth]{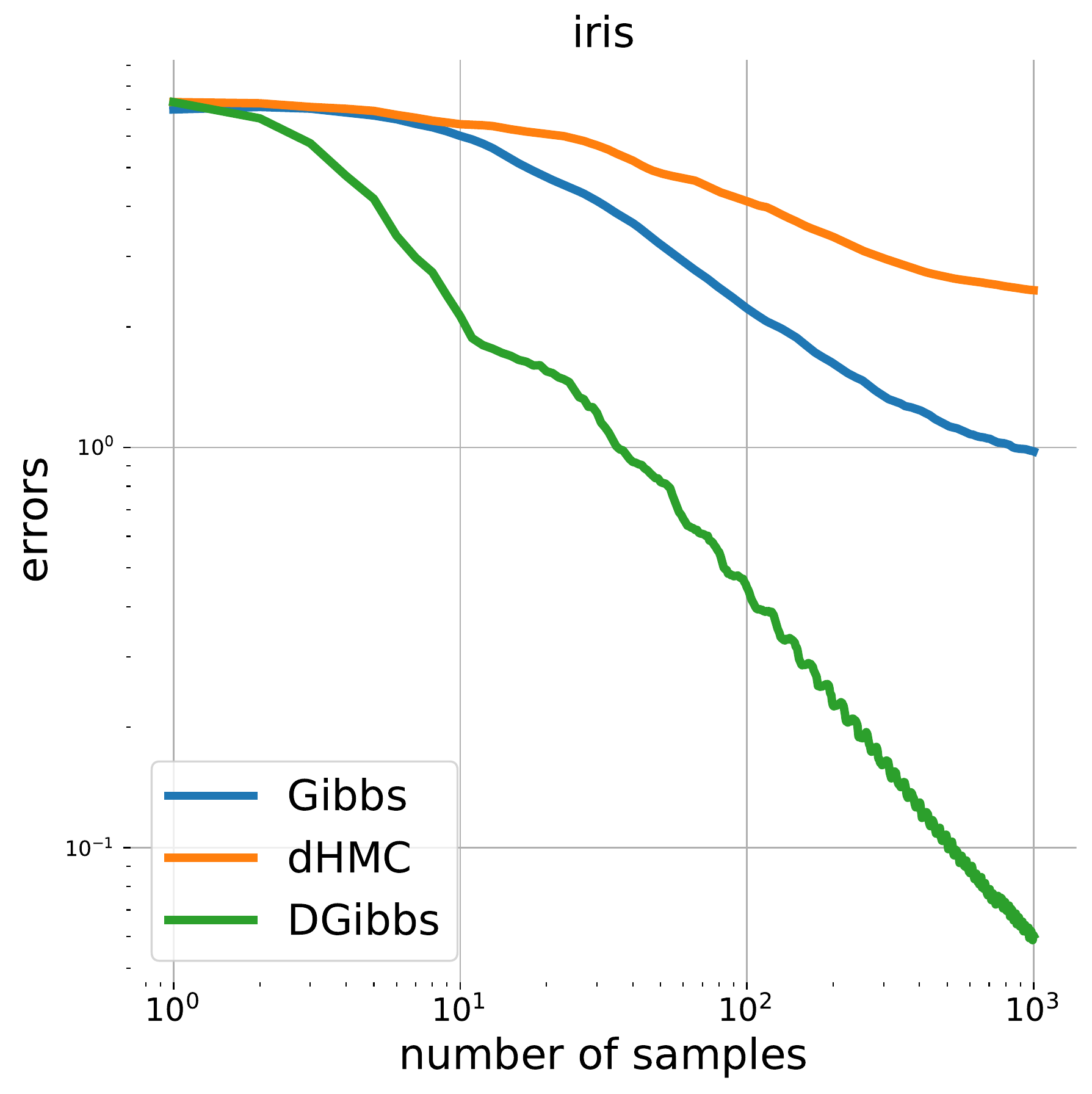}
    \includegraphics[width=0.23\textwidth]{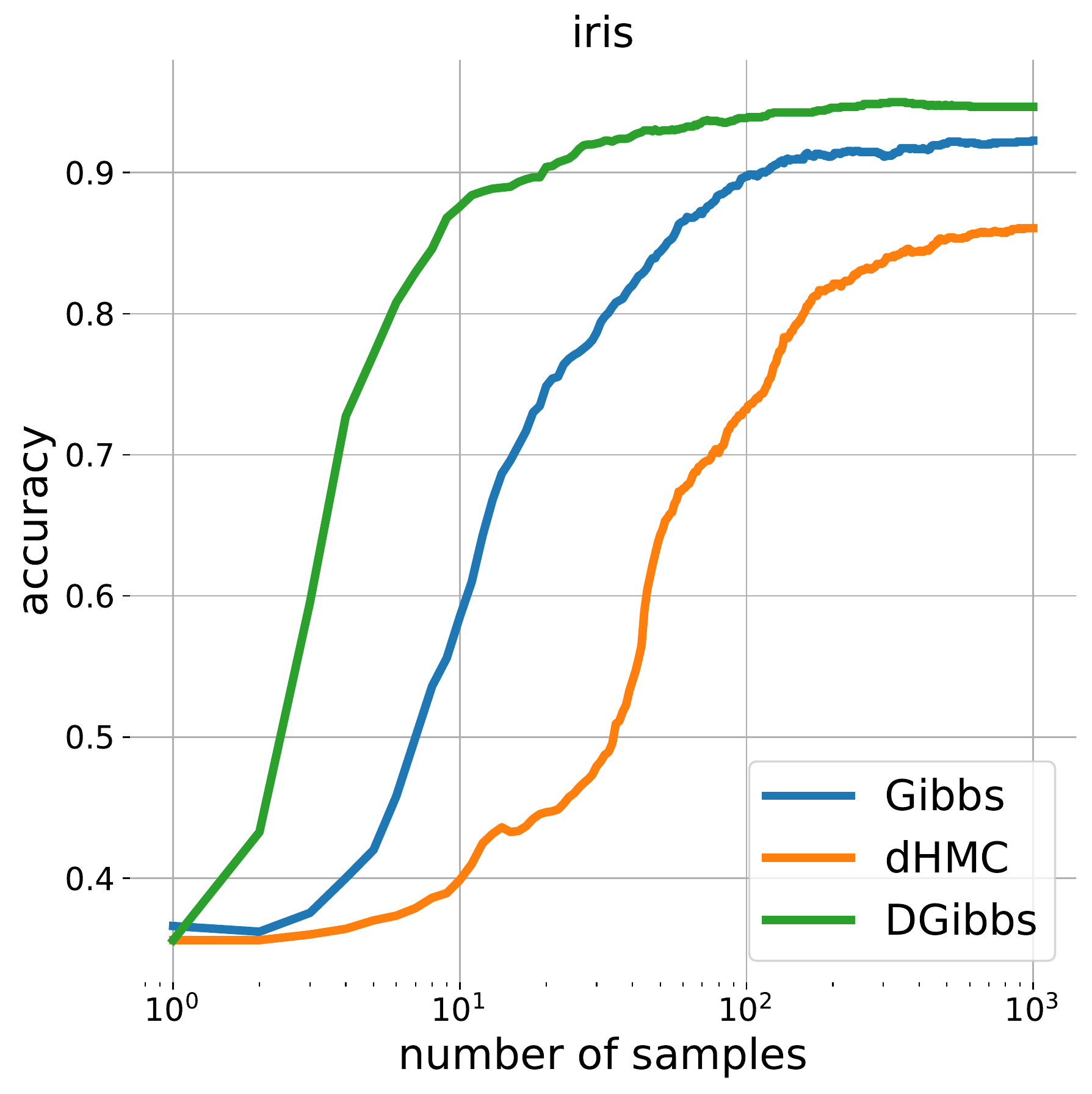}
    \includegraphics[width=0.23\textwidth]{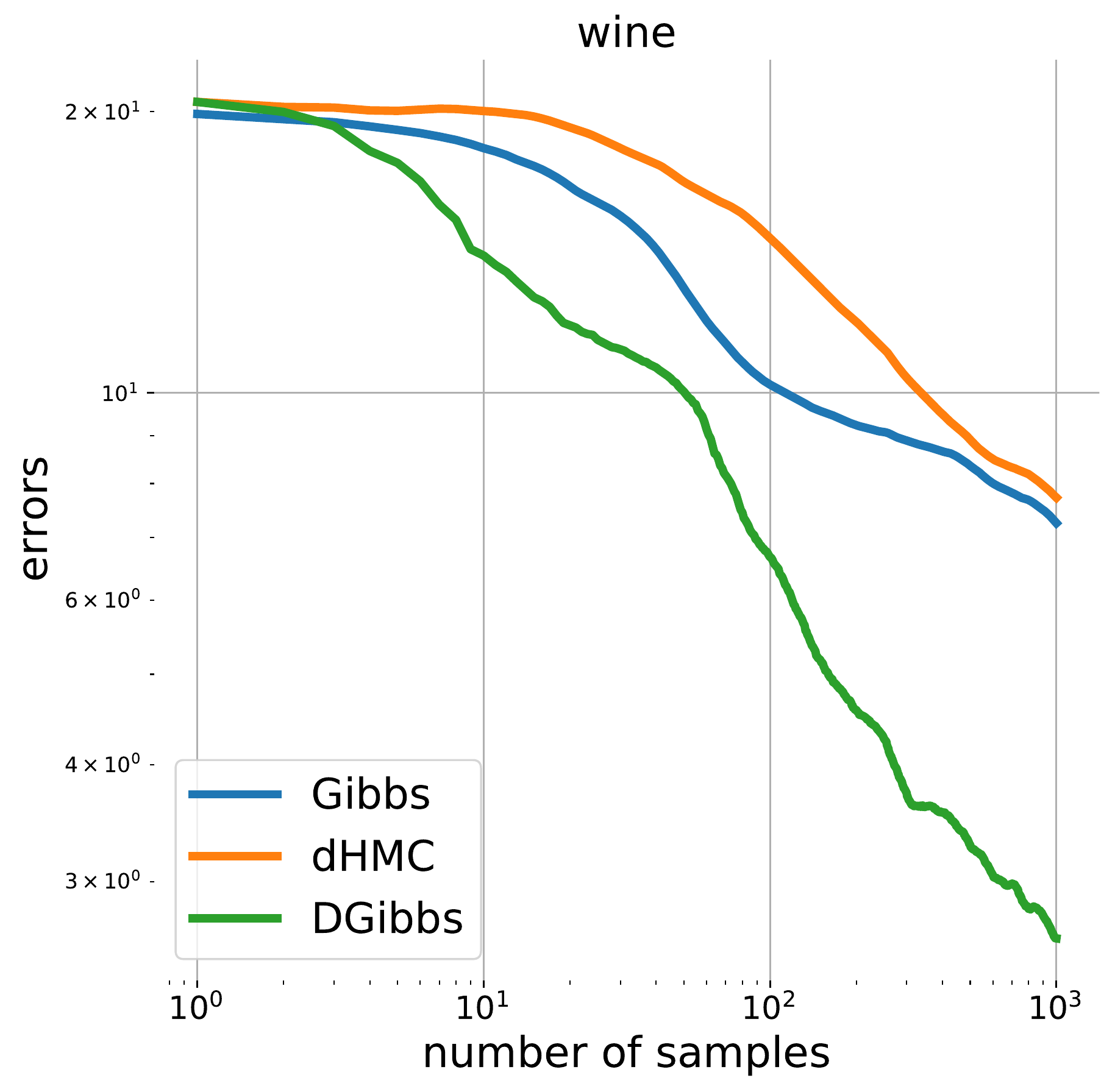}
    \includegraphics[width=0.23\textwidth]{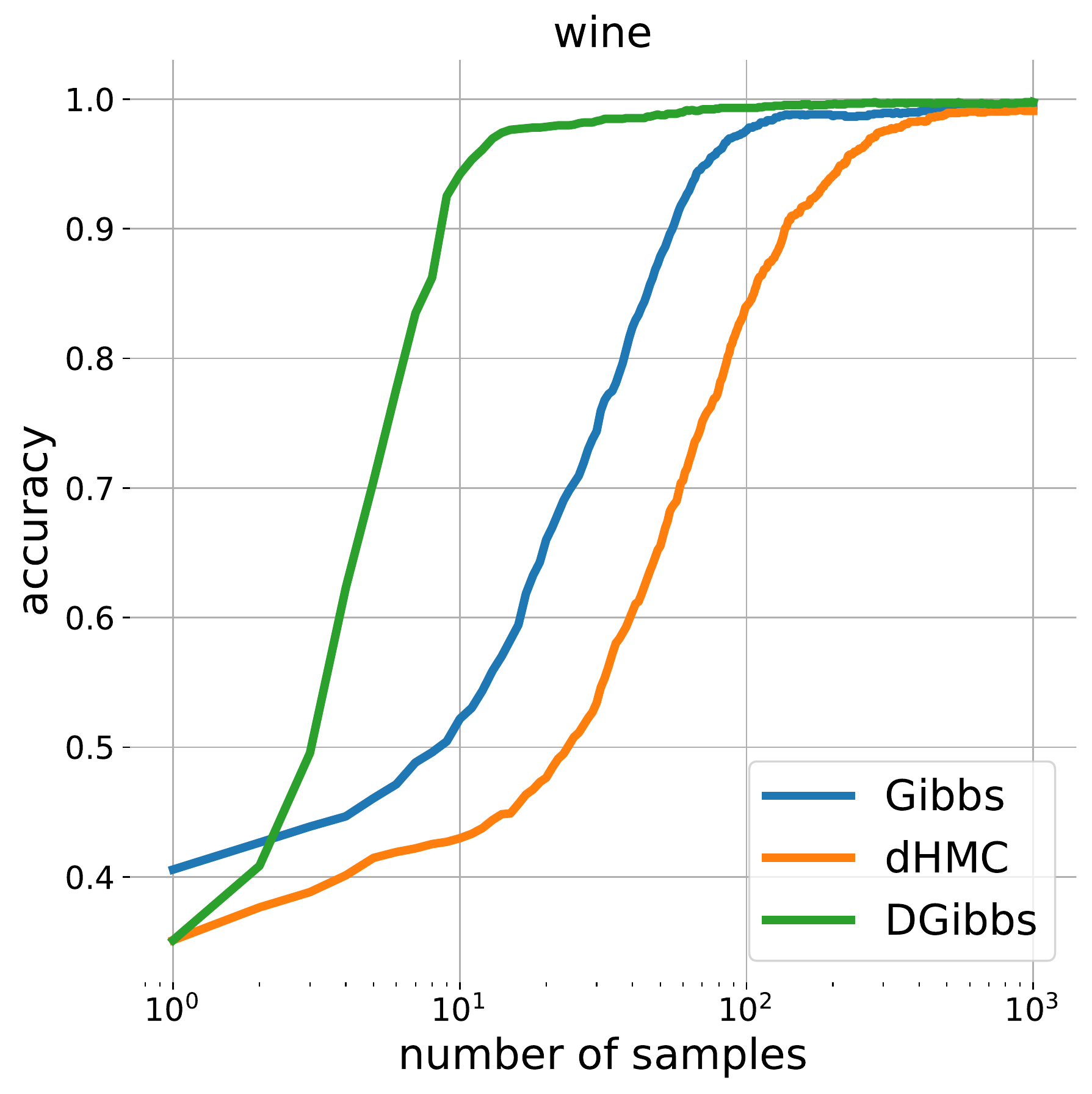}
    \caption{Sampling from the posterior distribution of logistic regression with binarized parameters. As datasets we consider Iris (on the left) and Wine (on the right). For both datasets we report the accuracy of classification estimated by averaging across collected samples and the error in the estimation of mean parameters values. We do not provide quantiles to keep the plots clear.}
    \label{fig:logreg}
\end{figure*}

In Fig. \ref{fig:denoising}, we demonstrate the results for the denoising.
Each iteration corresponds to one update of a dimension.
The dynamical Gibbs converges faster to the high probability states corresponding to the denoised image.
The intuition behind this phenomena is that dynamical Gibbs, as well as Suzuki's algorithm, adaptively set the velocities for each dimension, thus, flipping the noisy pixels first.
At the same time, Gibbs and dHMC need to go over all dimensions sequentially to get to the high probability state.
We also see that Suzuki's algorithm, after convergence to the denoised image, goes away to the low probability states. 
We conjecture that Suzuki's scheme is not ergodic in high dimensions and we observe some periodicity in its dynamics.

\subsection{Binarized logistic regression}

Another application of sampling discrete variables is Bayesian inference for machine learning models.
Here we consider the posterior distribution of logistic regression with binary weights.
For the dataset $\mathcal{D}$, the unnormalized density of the posterior is defined as $p(\theta\cond \mathcal{D}) \propto p(\mathcal{D}\cond\theta)p(\theta)$, where each parameter $\theta_i$ takes binary values $\{-1,+1\}$.
For the prior we take the uniform distribution, hence, the density is completely defined by the likelihood.

To estimate the performance of samplers, we report two metrics: the accuracy of classification estimated using the predictive distribution, and the estimation error of parameter mean values.
For the true mean value we take the estimate obtained by running Gibbs algorithm for $5000\cdot n$ steps, where $n$ is the number of parameters.

In Fig. \ref{fig:logreg}, we demonstrate the performance of all algorithms measured by accuracy and estimation error.
In terms of iterations the dynamical Gibbs algorithm performs better than others.
However, if we take the computational efforts into account the difference in performance will be insignificant (see Appendix \ref{app:experiments_logreg}).
It can be explained by the high computational cost of dynamical Gibbs: at every iteration we need to evaluate the conditionals for every dimension, while Gibbs requires only one conditional per iteration.
Note that this is not a problem for the Ising model since there we can efficiently evaluate all of the conditionals at once using matrix operations.
Thus, we conclude that our algorithm is favourable in the cases where the efficient evaluation of conditional distributions is possible or if we can compute them in parallel.

\section{Conclusion}
\label{sec:conclusion}

We propose a novel approach to sampling, relying on a purely deterministic dynamics defined by an ordinary differential equation.
To demonstrate the utility of our approach we derive the dynamical Gibbs algorithm that performs well on several tasks and is even able to outperform the independent sampler.
Finally, we characterize the family of such terministic samplers via the family of divergence-less vector fields on $\mathbb{R}^n$ and compact manifolds.

There are many interesting future directions to explore.  What can we prove about ergodicity and how do we design divergence free vector fields that mix fast? Under what conditions do we achieve $1/T$ convergence? Can we design hybrid samplers that, like HMC, alternate deterministic trajectories with stochastic moves? More generally, we believe that this work forges an interesting connection between machine learning and chaos theory that deserves further exploration.

\bibliography{bibliography}
\bibliographystyle{icml2021}


\newpage
\appendix
\onecolumn
\section{Proofs}
\subsection{Liouville's physical proof}
\label{app:liouville_proof}

To make the relation with sampling more apparent, we derive this equation from first principles.
That is, measure-preserving property of a function $g(x_0,t)$ could be obtained via the change of variables formula and written as
\begin{align}
    p(x_0) = p(g(x_0,t)) \bigg|\deriv{g(x_0,t)}{x_0}\bigg|,
    \;\text{ where }\; g(x_0,t) = x_0 + \int_{0}^t v(x(t)) dt.
    \label{eq_app:measure-preserving}
\end{align}
Here $g(x_0,t)$ is the evolution flow defined by the equation $dx/dt = v(x)$ that moves point $x_0$ for time $t$.
Note that we want our flow to preserve the measure for any time $t$, but we can consider only small times since if $p(x,t+dt) = p(x,t)$ then the density is stationary and doesn't change in time at all.
For small time $t$ we can write the time evolution as
\begin{align}
    g(x_0,t) = g(x_0,0) + \deriv{g(x_0,t)}{t}\bigg|_{t=0}t + o(t) = x_0 + v(x_0)t + o(t).
\end{align}
Then its Jacobian could be evaluated as
\begin{align}
    \bigg|\deriv{g(x_0,t)}{x_0}\bigg| = 1 + t\text{div}v(x_0) + o(t).
\end{align}
Putting this into \eqref{eq_app:measure-preserving}, we get
\begin{align}
    &p(x_0) = p(g(x_0,t)) + tp(g(x_0,t))\text{div}v(x_0) + o(t), \\ &\frac{dp}{dt} + p(x_0)\text{div}v(x_0) = 0 \implies \bigg(\nabla p,  \frac{dx}{dt}\bigg) + p(\nabla, v) = \text{div}(pv) = 0.
\end{align}

\subsection{Scaling lemma}
\label{app:scaling_lemma}

Here we prove the following lemma.
\begin{lemma*}
Define the phase curve of a trajectory $x(t)$ as the curve that the trajectory draws on the manifold on which $x(t)$ lives.
For any continuous vector field $v(x) \in \mathbb{R}^n$ and positive scalar function $s(x)$, the solutions of the following Cauchy problems
\begin{align}
    \frac{\dd x}{\dd t} = v(x), \;\mathrm{ and }\; \frac{\dd x}{\dd t} = s(x)v(x), \;\; s(x) > 0 \;\; \forall x, \;\; x(t_0) = x_0
\end{align}
have the same phase curves.
\end{lemma*}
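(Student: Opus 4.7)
The plan is to prove the lemma by explicit time reparametrization, using positivity of $s$ to guarantee that the reparametrization is a bijection of time intervals and hence the two solutions trace out the same set of points.

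First I would let $x(t)$ denote the (local) solution of $\dd x/\dd t = v(x)$ with $x(t_0)=x_0$, defined on some interval $I\ni t_0$, and define a new time variable $\tau(t)$ by
\begin{align}
\tau(t) = t_0 + \int_{t_0}^{t} \frac{1}{0 \cdot 0 + 1} \cdot s(x(r))\, \dd r,
\end{align}
i.e.\ $\dd \tau/\dd t = s(x(t))$. Since $s>0$ everywhere and $x(t)$ is continuous, the integrand is strictly positive and continuous, so $\tau(\cdot)$ is $C^1$, strictly increasing, and hence invertible onto its image, with inverse $t(\tau)$ satisfying $\dd t/\dd \tau = 1/s(x(t(\tau)))$.

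Next I would define $y(\tau) = x(t(\tau))$ and compute, by the chain rule,
\begin{align}
\frac{\dd y}{\dd \tau} = \frac{\dd x}{\dd t}\bigg|_{t=t(\tau)} \cdot \frac{\dd t}{\dd \tau} = v(x(t(\tau))) \cdot s(x(t(\tau))) = s(y(\tau))\, v(y(\tau)),
\end{align}
with $y(t_0) = x(t_0) = x_0$. By uniqueness of solutions to the Cauchy problem for $\dd y/\dd \tau = s(y)v(y)$ (assuming the standard regularity already implicit in the statement for solutions to exist and be unique, e.g.\ local Lipschitz continuity of $v$ and $s$), $y(\tau)$ is exactly the solution of the second ODE. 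The phase curve of $y$ is $\{y(\tau): \tau\in\tau(I)\} = \{x(t(\tau)): \tau\in\tau(I)\} = \{x(t): t\in I\}$, which is the phase curve of $x$. The converse direction is entirely analogous, reparametrizing $y$ back to $x$ via $\dd t/\dd\tau = 1/s(y(\tau))$.

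The main subtlety, rather than a real obstacle, is a domain issue: the two ODEs may have different maximal intervals of existence, so one should argue that the reparametrization matches maximal intervals or, more simply, that the two trajectories agree as subsets of state space on overlapping pieces and then extend by uniqueness. Since the statement concerns phase curves (subsets of $\mathbb{R}^n$) rather than time-parametrized maps, this is harmless: every point visited by one solution is visited by the other, which is exactly what the lemma claims.
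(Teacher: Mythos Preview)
Your approach is the same time-reparametrization idea the paper uses, but the chain-rule step is computed incorrectly. You set $\dd\tau/\dd t = s(x(t))$ and correctly note that then $\dd t/\dd\tau = 1/s(x(t(\tau)))$. However, in the displayed computation you substitute $\dd t/\dd\tau = s(x(t(\tau)))$ instead of its reciprocal, obtaining $\dd y/\dd\tau = s(y)v(y)$; with your definition of $\tau$ the correct result is $\dd y/\dd\tau = v(y)/s(y)$, which is not the second ODE. The remedy is immediate: either define $\tau(t) = t_0 + \int_{t_0}^{t} \dd r\,/\,s(x(r))$ so that $\dd t/\dd\tau = s(y(\tau))$, or, as the paper does, start from the solution $\psi$ of $\dot x = s(x)v(x)$ and reparametrize via $\dd\kappa/\dd t = 1/s(\psi(\kappa))$ to land on $\dot x = v(x)$. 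With that correction your argument is complete and essentially identical to the paper's.

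As a minor aside, the factor $1/(0\cdot 0+1)$ in your definition of $\tau$ is presumably a leftover artifact; it equals $1$ and plays no role.
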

\begin{proof}
Denote the solutions of both equations as $\phi(t)$ for the equation $\dot{x} = v$ and $\psi(t)$ for the equation $\dot{x} = sv$.
Then for both solutions we have
\begin{align}
    \frac{\dd \phi}{\dd t} = v(\phi(t)), \;\;\; \frac{\dd \psi}{\dd t} = s(\psi(t))v(\psi(t)).
\end{align}
Assume there exist such scalar function $\kappa(t)$ that $\psi(\kappa(t))$ is the solution of $\dot{x} = v$, and $\kappa(t_0) = t_0$.
If this is the case, then this function must satisfy
\begin{align}
    \frac{\dd \psi}{\dd t}= v(\psi(\kappa(t))) \;\; \implies \;\; \frac{\partial \psi}{\partial \kappa} \bigg|_{\kappa(t)} \frac{\dd \kappa}{\dd t} = s(\psi(\kappa(t)))v(\psi(\kappa(t))) \frac{\dd \kappa}{\dd t} = v(\psi(\kappa(t))) \;\; \implies \;\; \frac{\dd \kappa}{\dd t} = \frac{1}{s(\psi(\kappa(t)))}.
\end{align}
Clearly, such function $\kappa(t)$ exists and unique since it is the solution of the following Cauchy problem
\begin{align}
    \frac{\dd \kappa}{\dd t} = \frac{1}{s(\psi(\kappa))}, \;\;\; \kappa(t_0) = t_0.
\end{align}
Moreover, this function is monotonous since $s(x) > 0 \;\; \forall x$.
Thus, we have demonstrated that both functions $\phi(t)$ and $\psi(\kappa(t))$ are the solutions of the same Cauchy problem ($\dot{x} = v, \; x(t_0) = x_0$), where $\kappa$ is just a monotonous rescaling of time, i.e. a bijective change of the time variable.
\end{proof}

\subsection{Discontinuous Liouville's theorem}
\label{app:disc_liouville}

\begin{figure}[h]
    \centering
    \begin{subfigure}[b]{0.45\textwidth}
            \includegraphics[width=\linewidth]{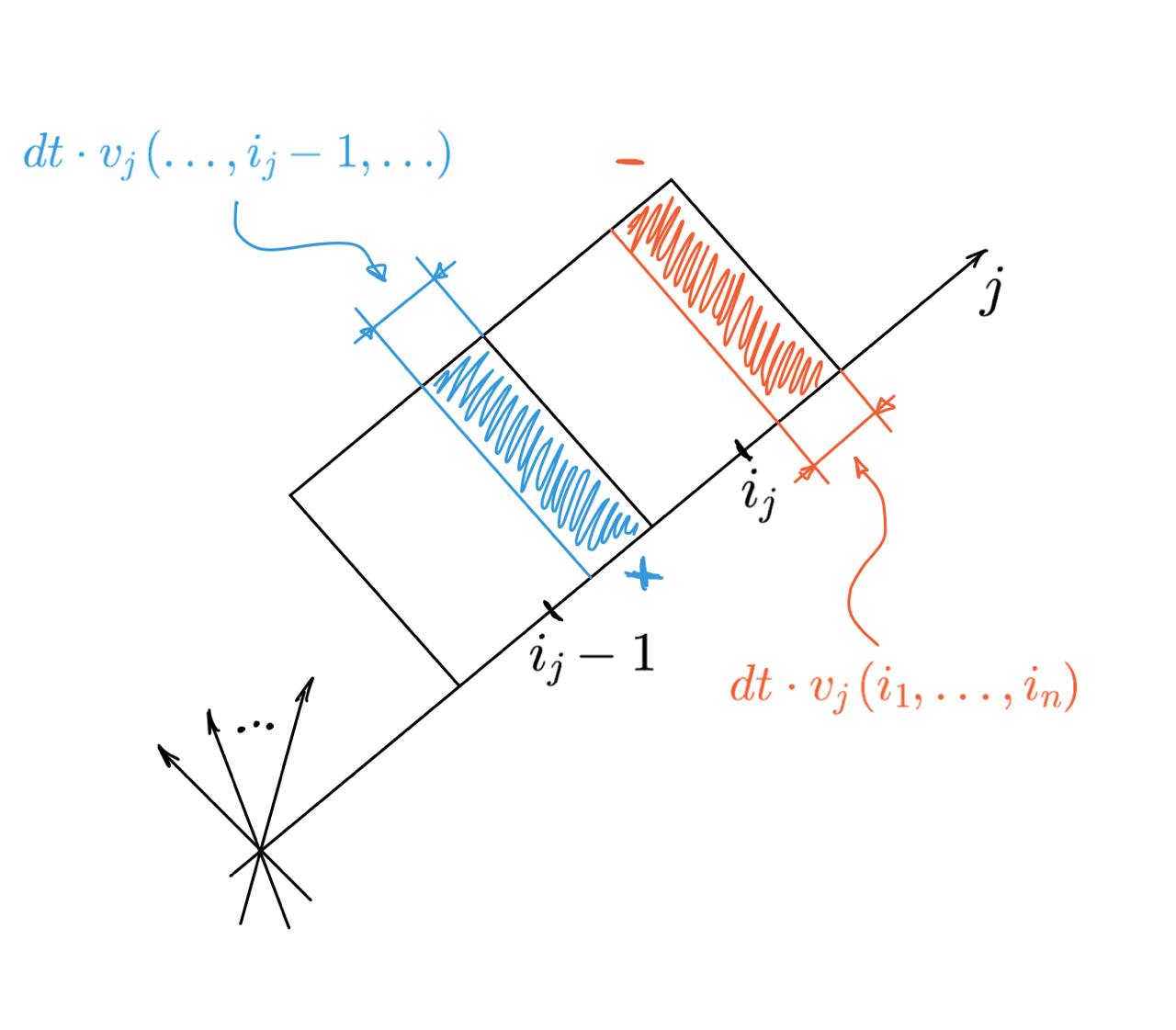}
            \caption{Change of volume along $j$-th dimension.}
            \label{fig:disc_liouville_1}
    \end{subfigure}%
    \begin{subfigure}[b]{0.45\textwidth}
            \includegraphics[width=\linewidth]{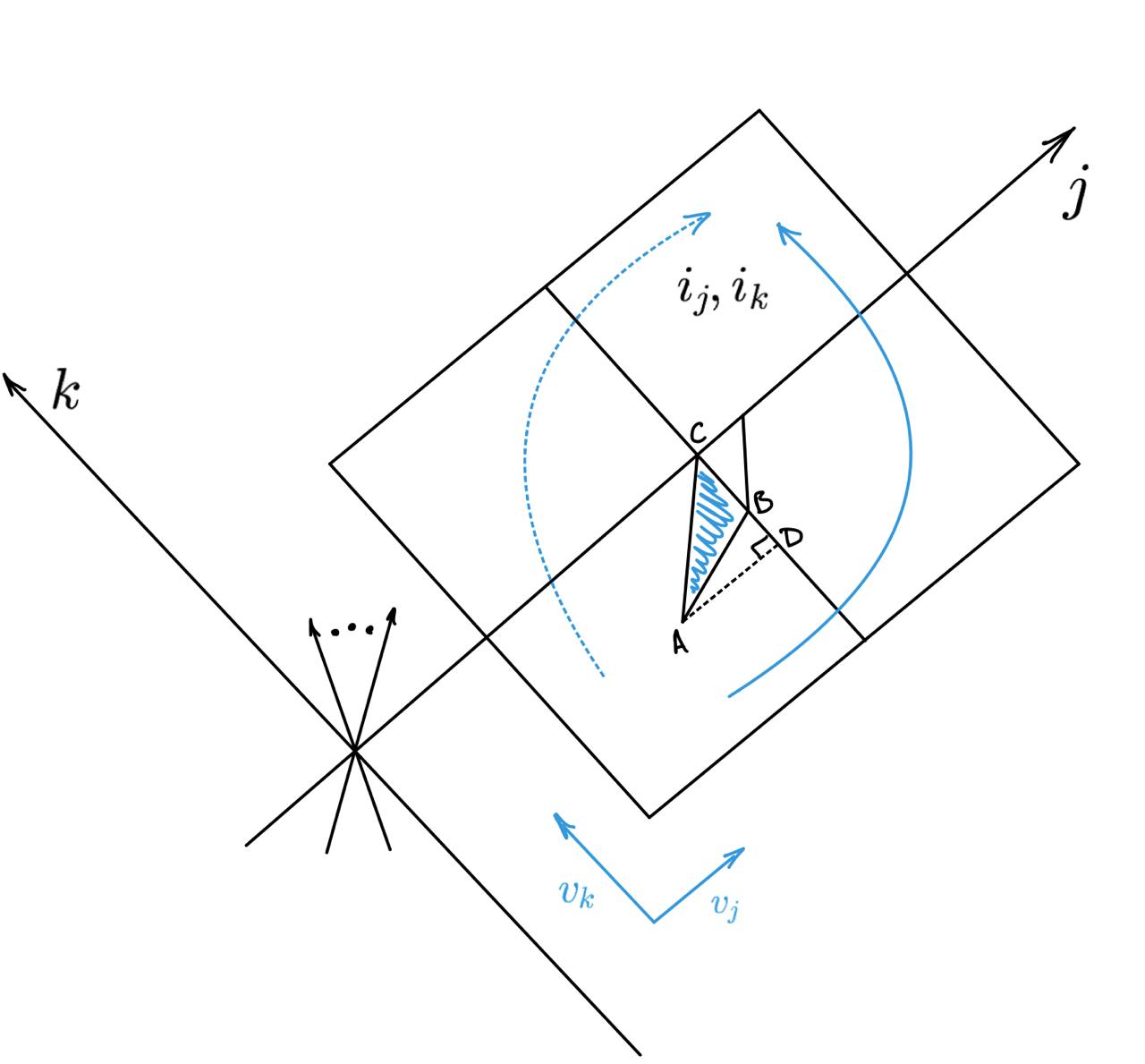}
            \caption{Flow of volume between dimensions $k$ and $j$.}
            \label{fig:disc_liouville_2}
    \end{subfigure}
    \caption{Illustrations for the discontinuous Liouville's theorem.}
    \label{fig:disc_liouville}
\end{figure}

Here we consider the proof of the Liouville's theorem for the dynamical Gibbs operating on a discrete distribution.
As we describe in the text we consider the discrete distribution as a joint distribution of $n$ discrete random variables and the $j$-th random variable takes integer values in $[0,\ldots,n_j-1]$.
The target distribution is then dequantized in the following sense.
Every state with indices $i_1,\ldots,i_n$ is associated with the hyper cube $\{(i_1,\ldots,i_n)+u: u\in[0,1]^n\}$.
Inside this hypercube the density is constant and equals to $p(i_1,\ldots,i_n)$.
The vector field $v$ is also constant in each cell and is denoted as $v(i_1,\ldots,i_n)$.

For the given vector field we consider the change of the density in time:
\begin{align}
\begin{split}
    p(i_1,\ldots,i_n,t+dt) - p(i_1,\ldots,i_n,t) = & \sum_{j=1}^ndtv_j(\ldots,i_j-1,\ldots)p(\ldots,i_j-1,\ldots,t) - \\ 
    &- \sum_{j=1}^ndtv_j(i_1,\ldots,i_n)p(i_1,\ldots,i_n,t) + o(dt).
\end{split}
\label{appeq:liouville}
\end{align}
The first two summands correspond to the ingoing and outgoing mass, and are the first order terms w.r.t. $dt$. 
The illustration for these terms you can see at Fig. \ref{fig:disc_liouville_1}.
The flows from other directions are $o(dt)$ as illustrated by the example in Fig. \ref{fig:disc_liouville_2}.
Indeed the square of the triangle ABC is $S = \frac{1}{2}dtv_j(\ldots,i_k-1,\ldots,i_j-1,\ldots)dtv_k(\ldots,i_k-1,\ldots)$. 
The same reasoning applies for all combinations of dimensions.

From equation \eqref{appeq:liouville}, we get
\begin{align}
    \frac{d p(i_1,\ldots,i_n)}{dt} = - \bigg[\sum_{j=1}^ndtv_j(i_1,\ldots,i_n)p(i_1,\ldots,i_n,t) - \sum_{j=1}^ndtv_j(\ldots,i_j-1,\ldots)p(\ldots,i_j-1,\ldots,t) \bigg],
\end{align}
which is the discrete analog of the continuity equation.

For the dynamical Gibbs we have,
\begin{align}
    v_j(i_1,\ldots,i_n) = c_j \frac{\sum_{i_j} p(i_1,\ldots,i_n)}{p(i_1,\ldots,i_n)} \;\;\implies\;\; \frac{d p(i_1,\ldots,i_n)}{dt} = 0.
\end{align}

\section{Hodge decomposition}
\label{app:hodge}

\subsection{Dynamical Gibbs}

We first remind that we can put the distribution on a torus $S^n$ continuously by mapping the distribution to the unit cube $[0,1]^n$ via hyperbolic tangent and then gluing the corresponding edges together.
Note that the density $p(x) = 0$ for all $x$ from the edges.
The vector field defining the measure-preserving flow on a manifold can be written as
\begin{align}
    v = \frac{1}{p} \omega^\sharp, \;\; \text{ where } \;\; \omega \in \Omega^1(S^n) \;\; \text{ and }\;\; \omega  = \gamma + \delta\beta, \;\; \gamma \in \text{ker}_\Delta(\Omega^1(S^n)), \;\; \beta \in \Omega^2(S^n).
\end{align}
The 1-form corresponding to the dynamical Gibbs is then $\omega = \sum_i p(x_{\setminus i}) dx_i$. 
Here we provide $\beta$ such that $\omega = \delta \beta$.
Indeed, straightforward evaluation shows that $d\omega \neq 0$, hence, $\beta \neq 0$.
The equality $\omega = \delta \beta = \star d \star \beta$ can be rewritten as
\begin{align}
    \star \omega = \sum_{i=1}^n (-1)^{i+1} p(x_{\setminus i}) \bigwedge_{j\neq i} dx_j = d\star\beta, \;\; \text{ where } \;\; (\star\beta) \in \Omega^{n-2}(S^n).
\end{align}
Now we see that the solution is
\begin{align}
    \star \beta = \sum_{i < j}\int^{x_j}_{0} dx_j' p(x_{\setminus i}) (-1)^{j+1}(-1)^{i+1} \frac{1}{n-i} \bigwedge_{k\neq i,j} dx_k.
\end{align}
Indeed,
\begin{align}
    d\star \beta & = \sum_{i < j} p(x_{\setminus i}) (-1)^{j+1}(-1)^{i+1} \frac{1}{n-i} dx_j \bigwedge_{k\neq i,j} dx_k = \\
    & = \sum_{i =1}^{n-1} \sum_{j=i+1}^n p(x_{\setminus i}) (-1)^{j+1}(-1)^{i+1} \frac{1}{n-i} (-1)^{j+1} \bigwedge_{k\neq i} dx_k = \\
    & = \sum_i \frac{n-i}{n-i} p(x_{\setminus i}) (-1)^{i+1} \bigwedge_{k\neq i} dx_k = \star \omega
\end{align}
Hence, the dynamical Gibbs sampler can be written as
\begin{align}
    v = \frac{1}{p}(\delta\beta)^\sharp, \;\;\text{ where }\;\; \beta = \star\bigg(\sum_{i < j}\int^{x_j}_{0} dx_j' p(x_{\setminus i}) (-1)^{j+1}(-1)^{i+1} \frac{1}{n-i} \bigwedge_{k\neq i,j} dx_k\bigg).
\end{align}

\subsection{Hamiltonian Monte Carlo}

The same reasoning applies for the Hamiltonian Monte Carlo.
The corresponding 1-form is
\begin{align}
    \omega = \sum_{i=1}^n \frac{\partial H}{\partial p_i} dq_i - \sum_{i=1}^n \frac{\partial H}{\partial q_i} dp_i, 
\end{align}
hence,
\begin{align}
    \star\omega = \sum_{i=1}^n \frac{\partial H}{\partial p_i} (-1)^{i+1}\bigwedge_{j\neq i} dq_j \bigwedge_k dp_k - \sum_{i=1}^n \frac{\partial H}{\partial q_i} (-1)^{i+1} (-1)^n\bigwedge_{k} dq_k \bigwedge_{j\neq i} dp_j.
\end{align}
It is easy to check that this differential form corresponds to divergence-less vector field by verifying $d\star \omega = 0$.
As in the previous case with the dynamical Gibbs, we want to find such $\beta$ that $\star \omega = d\star \beta$.
Thus, we see
\begin{align}
    \star\beta = \sum_i (-1)^{n-1}H \bigwedge_{k\neq i}dq_k\bigwedge_{l\neq i}dp_l.
\end{align}
Indeed,
\begin{align}
    d\star \beta = \sum_i\bigg((-1)^{n-1}\frac{\partial H}{\partial q_i} (-1)^{i+1}\bigwedge_k dq_k \bigwedge_{l\neq i}dp_l + (-1)^{n-1}\frac{\partial H}{\partial p_i} (-1)^{i+1}(-1)^{n-1}\bigwedge_{k\neq i} dq_k \bigwedge_{l}dp_l\bigg) = \star\omega.
\end{align}
Hence, the Hamiltonian Monte Carlo can be written as
\begin{align}
    v = \frac{1}{p}(\delta\beta)^\sharp, \;\;\text{ where }\;\; \beta = \star\bigg(\sum_i (-1)^{n-1}H \bigwedge_{k\neq i}dq_k\bigwedge_{l\neq i}dp_l\bigg).
\end{align}


\section{Experiments}
\label{app:experiments}

\subsection{Gibbs portrait}
\label{app:experiments_portrait}

\begin{figure}[h]
    \centering
    \includegraphics[width=0.95\textwidth]{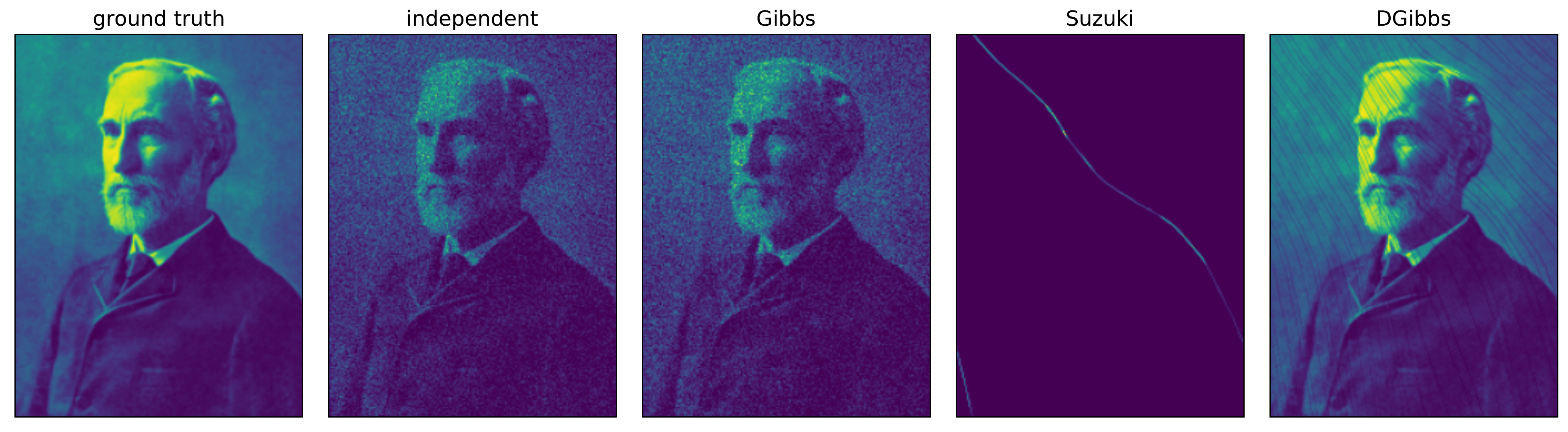}
    \caption{2D histograms for different samplers operating on the 2D target discrete distribution (first picture).}
    \label{fig:ising_app}
\end{figure}

\subsection{Ising model}
\label{app:experiments_ising}

\begin{figure}[h]
    \centering
    \includegraphics[width=0.24\textwidth]{pics/log_prob_ising.pdf}
    \includegraphics[width=0.24\textwidth]{pics/error_ising.pdf}
    \includegraphics[width=0.24\textwidth]{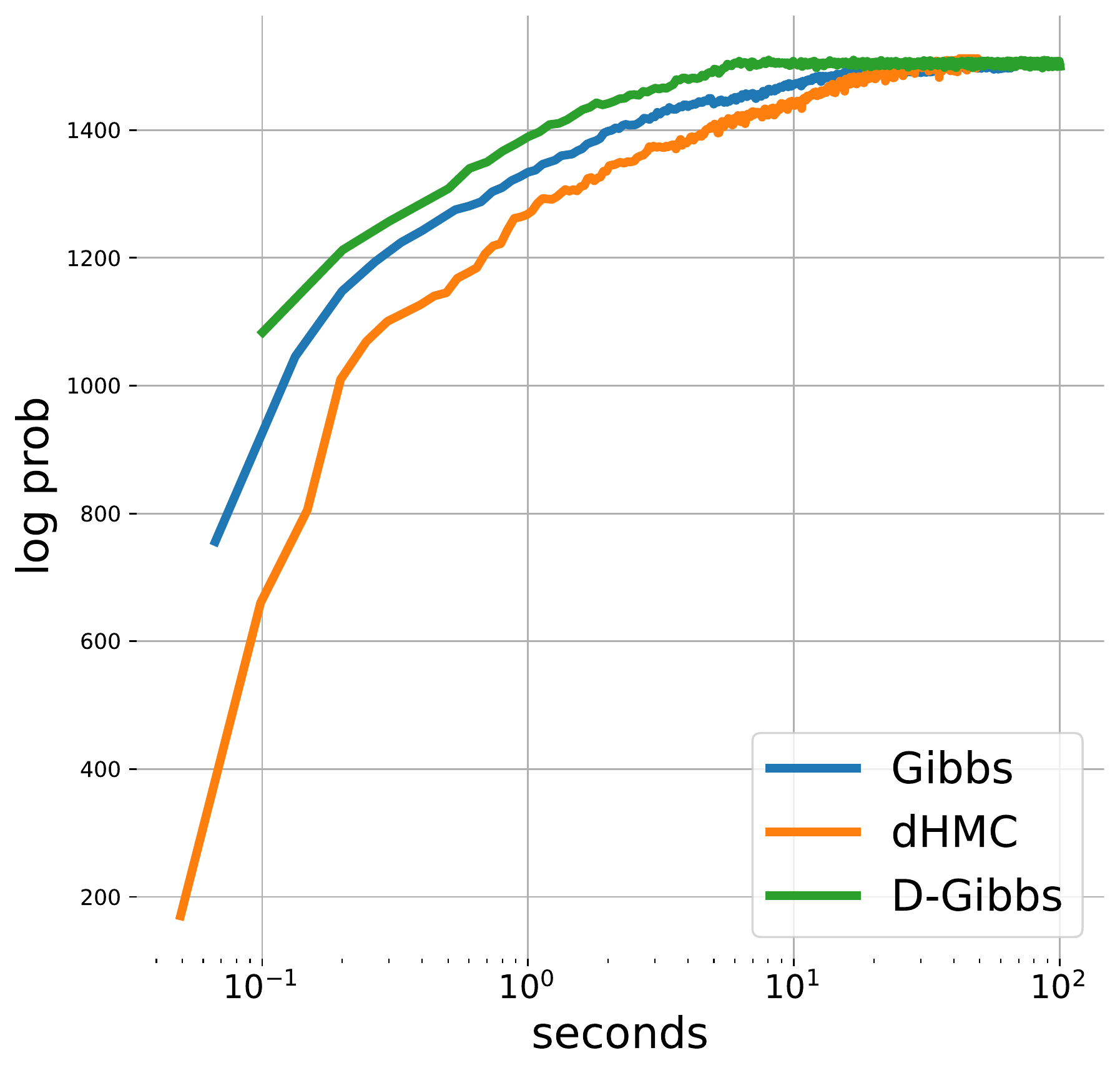}
    \includegraphics[width=0.24\textwidth]{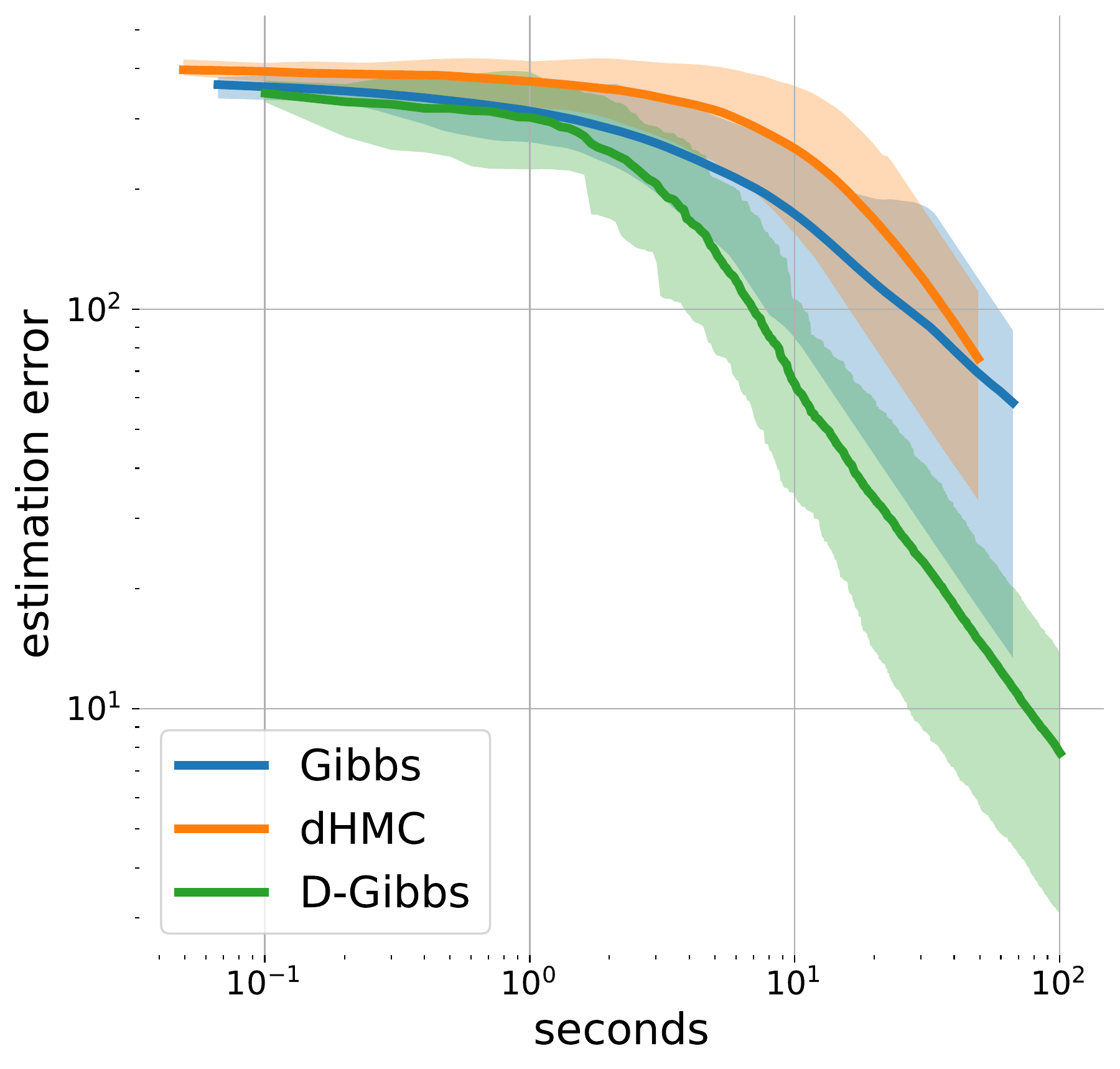}
    \caption{Convergence of samplers for the Ising model, in terms of iterations on the left and in terms of running time on the right. For the details of these experiments see code provided in the supplementary material.}
    \label{fig:ising_app}
\end{figure}

\subsection{Binarized logistic regression}
\label{app:experiments_logreg}

\begin{figure*}[h]
    \centering
    \includegraphics[width=0.24\textwidth]{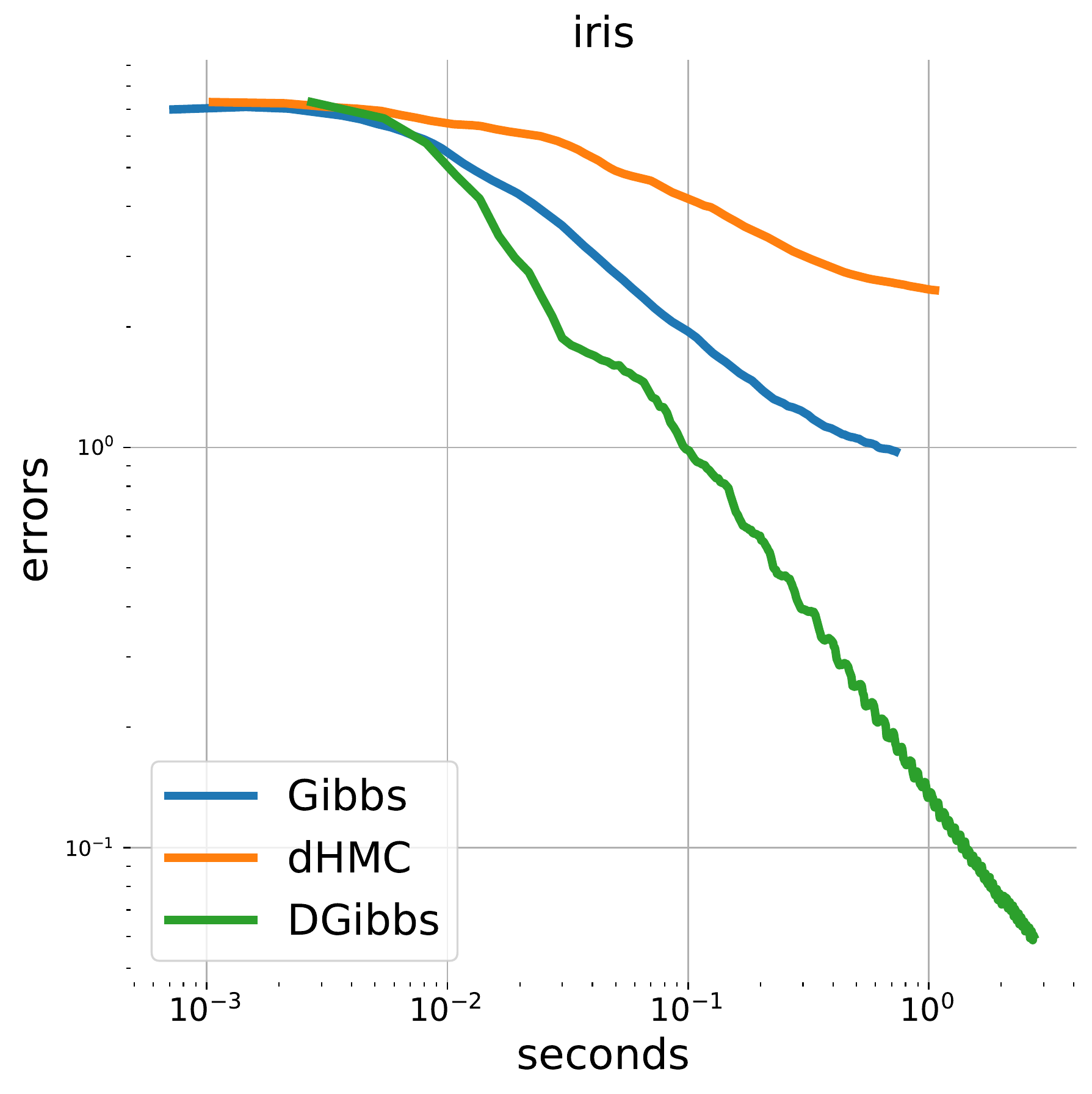}
    \includegraphics[width=0.24\textwidth]{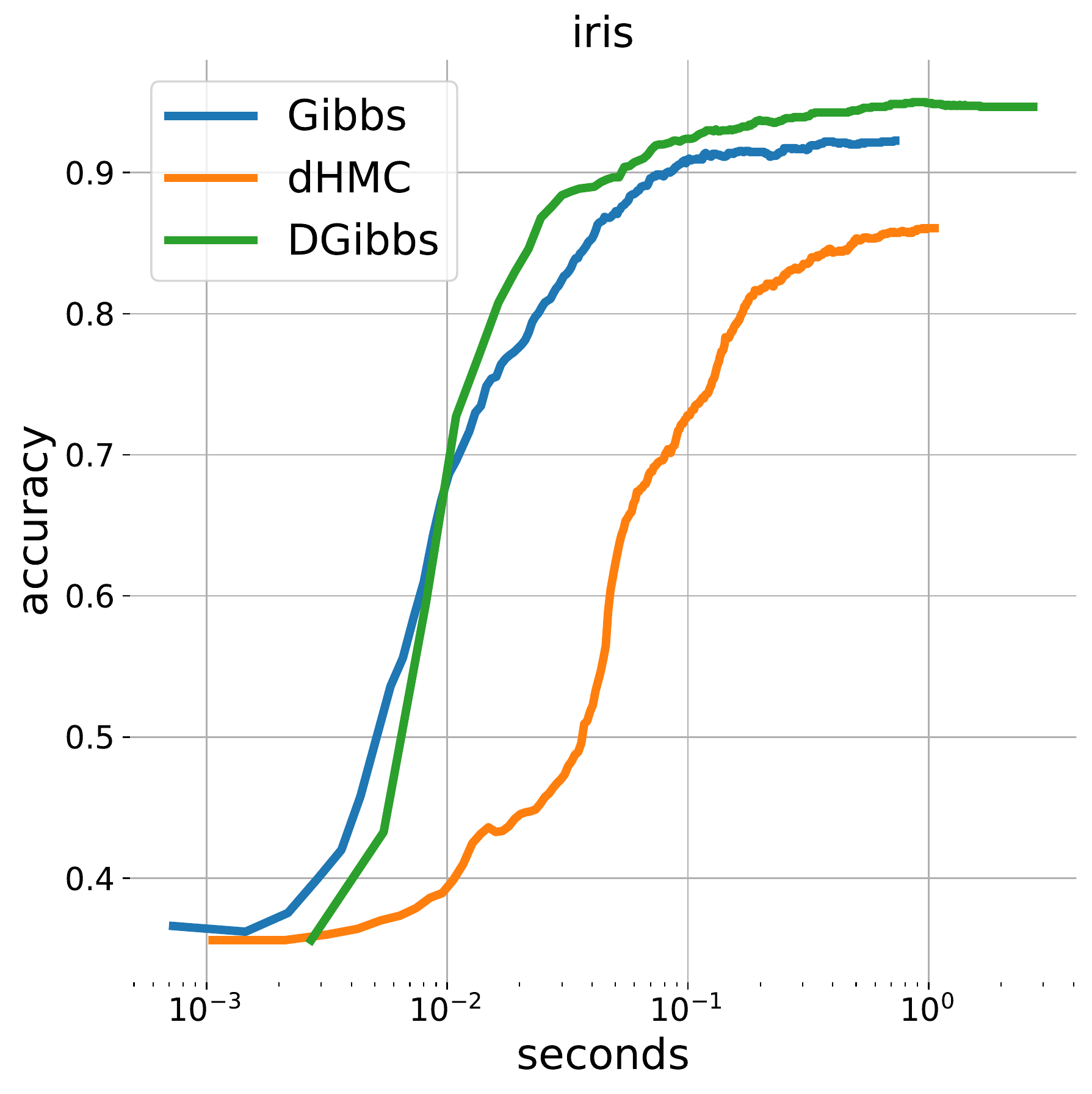}
    \includegraphics[width=0.24\textwidth]{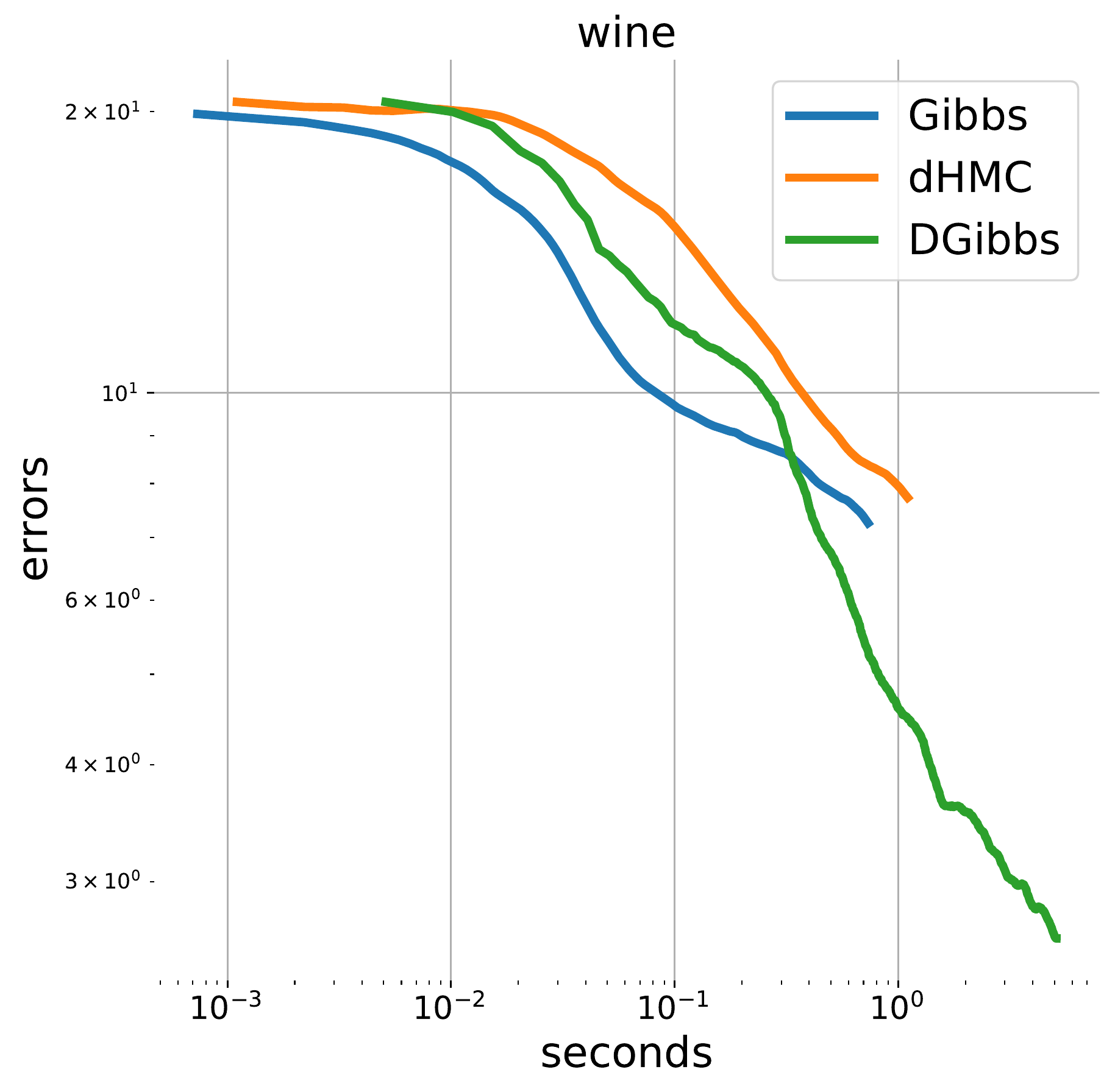}
    \includegraphics[width=0.24\textwidth]{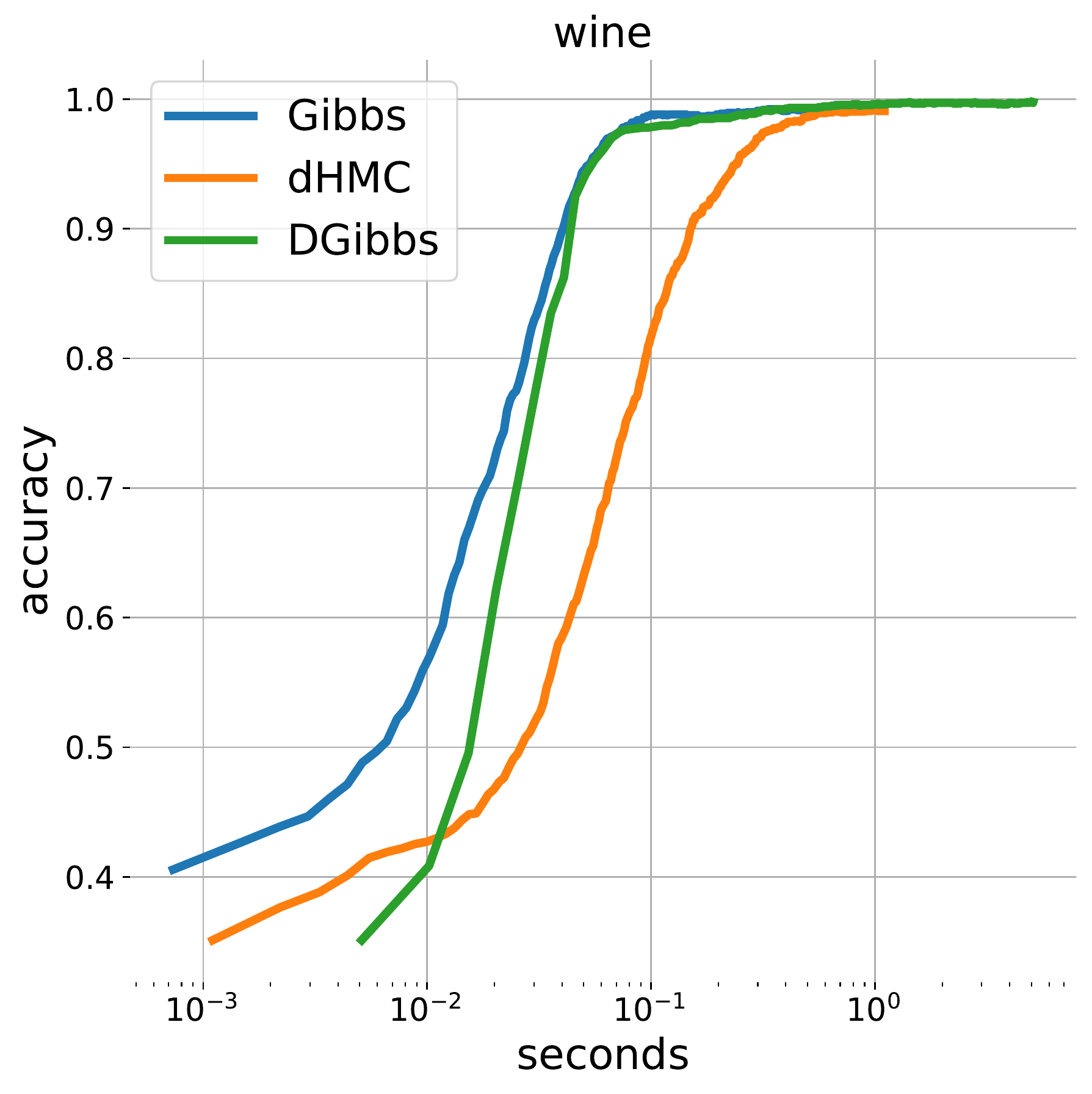}
    \caption{Sampling from the posterior distribution of logistic regression with binarized parameters. As datasets we consider Iris (on the left) and Wine (on the right). For both datasets we report the accuracy of classification estimated by averaging across collected samples and the error in the estimation of mean parameters values. For each metric we report running time on the horizontal axis.}
    \label{fig:logreg}
\end{figure*}

\end{document}